\documentclass[journal]{IEEEtran}

\usepackage{amsmath}
\usepackage{amssymb}
\usepackage{algpseudocode}
\usepackage{algorithm}
\usepackage{algpseudocode}
\usepackage{float}
\newtheorem{lemma}{Lemma}
\newtheorem{theorem}{Theorem}
\usepackage{capt-of}
\usepackage{booktabs}
\usepackage{tabularx}
\usepackage{array}
\usepackage{cite}
\usepackage{graphicx}
\usepackage{dblfloatfix}
\usepackage{placeins}
\usepackage[hidelinks,colorlinks=true,linkcolor=blue,citecolor=blue,urlcolor=blue]{hyperref}
\usepackage{url}
\ifCLASSINFOpdf

\begin{document}

\title{GeoDose-CP: Graph-Local Conformal Inference for Continuous-Treatment Earth Observation}
\author{Md Khalid Hasan Sakib, Dristi Datta*, Manoranjan Paul, and Davina White%
\thanks{Md Khalid Hasan Sakib is with the Department of Computer Science and
Engineering, Uttara University, Dhaka, Bangladesh
(e-mail: 2261101010@uttara.ac.bd).}%
\thanks{Dristi Datta is with the School of Computing, Mathematics and
Engineering, Charles Sturt University, Bathurst, NSW 2795, Australia,
and also with Australian Integrated Carbon (AiCarbon), Level 4,
191 Pulteney Street, Adelaide, SA 5000, Australia
(e-mail: ddatta@csu.edu.au; dristi.datta@aicarbon.com).
Corresponding author: Dristi Datta.}%
\thanks{Manoranjan Paul is with the School of Computing, Mathematics and
Engineering, Charles Sturt University, Bathurst, NSW 2795, Australia
(e-mail: mpaul@csu.edu.au).}%
\thanks{Davina White is with Australian Integrated Carbon (AiCarbon),
Level 4, 191 Pulteney Street, Adelaide, SA 5000, Australia
(e-mail: davina.white@aicarbon.com).}%
}

\maketitle


\begin{abstract}
Reliable intervention-oriented uncertainty quantification from Earth observation (EO) remains difficult when continuous treatment shift, spatial dependence, limited support, and satellite-outcome uncertainty must be addressed simultaneously. Existing causal, conformal, and spatial approaches address components of this problem, but their direct combination does not generally recover the correct interventional reference law because candidate reassignment jointly alters treatment likelihood, standardized residuals, and graph-dependent residual likelihood. This study presents GeoDose-CP, a support-aware conformal framework for localized stochastic potential outcomes under continuous or mixed continuous--atomic treatment. Its central methodological contribution is a graph-local target-orbit law that jointly represents intervention-induced treatment shift, the inverse outcome-scale Jacobian, and spatial residual dependence, together with exact weighted candidate inversion, a scalable sparse approximation with explicit discrepancy accounting, and refusal under inadequate support. Evaluation used controlled known-truth experiments, MineDoseBench, treatment-density sensitivity analysis, external conformal comparators, and a multi-mine New South Wales (NSW) study. In MineDoseBench, GeoDose-CP achieved mean selective coverage of 0.9692 across 27 configurations and a minimum local $q_{0.05}$ of 0.8951; exact--sparse auditing produced nine inclusion disagreements over 2,700 targets. In the NSW study, the absence of an auditable longitudinal rehabilitation treatment rendered treatment-dependent inference nonoperational rather than forcing inference through a proxy exposure.
\end{abstract}

\begin{IEEEkeywords}
Conformal prediction, continuous treatment, Earth observation, spatial dependence, uncertainty quantification.
\end{IEEEkeywords}

%
\IEEEpeerreviewmaketitle


\section{Introduction}
\label{sec:introduction}

Earth observation (EO) increasingly supports environmental management and
restoration assessment at spatial scales where controlled field experiments
are difficult or infeasible. Repeated satellite observations provide
spatially continuous measurements of land-surface change from regional to
global scales
\cite{GorelickEtAl2017EarthEngine,ReichsteinEtAl2019EarthSystemML},
creating opportunities to move beyond descriptive mapping toward inference
on outcomes under specified interventions. This distinction is particularly
important in mine rehabilitation, where satellite time series quantify
disturbance and vegetation recovery
\cite{WernerEtAl2020GlobalMineAreas,MausEtAl2022MiningLandUse,
TangWerner2023MiningFootprint,VidalMacuaEtAl2020MineRecovery}, but observed
recovery does not identify what would have occurred under a different
rehabilitation intensity. Intervention-oriented EO inference must therefore
address non-random treatment assignment, continuous and spatially
heterogeneous treatment support, spatial dependence, geographic deployment,
and uncertainty in satellite-derived outcomes.

EO research has developed powerful methods for spatial prediction and
validation. Gaussian-process models, machine learning, and Earth-system
learning represent complex nonlinear and spatial structure
\cite{CampsVallsEtAl2016GaussianProcessesEO,
SvendsenEtAl2020DeepGPEO,ReichsteinEtAl2019EarthSystemML}, while
geographically separated validation is increasingly recommended for spatial
deployment
\cite{RobertsEtAl2017SpatialCV,ValaviEtAl2019BlockCV,
MeyerEtAl2018TargetValidation}. Such validation can reveal substantially
weaker generalization than random partitioning
\cite{PlotonEtAl2020SpatialValidation}, and predictor selection and
transferability can depend strongly on deployment geography
\cite{MeyerEtAl2019SpatialPredictors,LudwigEtAl2023Transferability}.
These advances strengthen geographic prediction, but their target generally
remains an outcome under the observed data-generating regime rather than a
potential outcome under a specified intervention.

Causal inference addresses the intervention problem directly.
Generalized propensity-score methods extended treatment-effect analysis to
continuous treatment regimes
\cite{Imbens2000DoseResponse,ImaiVanDyk2004GeneralTreatment}, followed by
doubly robust, balancing, and locally adjusted dose--response estimators
\cite{KennedyEtAl2017ContinuousTreatment,FongEtAl2018CBPSContinuous,
PapadogeorgouDominici2020LocalConfounding}. These methods support causal
reasoning across a treatment continuum, but estimating a dose--response
function does not itself provide a conformal prediction set for a supported
potential outcome or determine how treatment transport should interact with
spatially dependent calibration.

Conformal prediction addresses uncertainty from a complementary direction.
Classical methods provide distribution-free predictive coverage under
exchangeability
\cite{ShaferVovk2008Tutorial,LeiWasserman2014PredictionBands,
LeiEtAl2018PredictiveInference,BarberEtAl2021JackknifePlus}, with later
extensions to counterfactual outcomes
\cite{LeiCandes2021Counterfactuals,QinEtAl2025CovariateShiftCausal},
continuous-treatment causal and dose--response inference
\cite{SchroederEtAl2025ContinuousTreatmentCP,
VerhaegheEtAl2026DoseResponseCP}, nonexchangeable data
\cite{BarberEtAl2023BeyondExchangeability,
OliveiraEtAl2024Nonexchangeable}, localized calibration
\cite{Guan2023LocalizedConformal,HoreBarber2025LocalWeights}, and covariate
or distribution shift
\cite{QiuEtAl2023CovariateShift,YangEtAl2024DoublyRobust,
GibbsCandes2024OnlineShift,ZhangEtAl2026TransferConformal}. Model-free
spatial conformal prediction has also been developed for geographically
dependent observations
\cite{MaoMartinReich2024SpatialPrediction}. These developments address
important parts of the EO problem, but they do not compose automatically:
localization does not define an interventional target law,
distribution-shift weighting does not encode graph-dependent residual
reassignment, and spatial conformal calibration does not identify a
continuous-treatment potential outcome.

The unresolved EO problem is therefore conformal potential-outcome inference
when \emph{intervention-induced treatment shift and spatially dependent
calibration act jointly}. Under candidate reassignment, both the
observational treatment likelihood and graph-residual likelihood change,
so the target reassignment probability cannot in general be constructed by
multiplying independent treatment and spatial weights. The required object
is a \emph{joint graph-local target law}.

We address this gap with \emph{GeoDose-CP}, a support-aware conformal
framework for localized stochastic potential outcomes under continuous or
mixed continuous--atomic treatment. Its central methodological contribution
is the graph-local target-orbit law (G2), which jointly represents
treatment shift, the inverse outcome-scale Jacobian, and graph-dependent
residual likelihood. GeoDose-CP connects this law to exact candidate
inversion, a scalable sparse approximation with explicit discrepancy
accounting, and refusal when treatment or spatial support is inadequate.
Causal identification and conformal validity remain separate requirements.

The principal contributions are:

\begin{itemize}

\item \textbf{Graph-local target law:}
a supported localized stochastic potential outcome and its corresponding
graph-local target-orbit law are derived for continuous or mixed
continuous--atomic treatment, establishing why independent
treatment--spatial factorization is not valid in general.

\item \textbf{Exact, scalable, and support-aware inference:}
the joint target law is coupled with exact weighted candidate inversion,
a target-weighted sparse approximation, and explicit refusal when treatment
support, endpoint support, graph connectivity, local calibration
information, or the applicable certification requirement is inadequate.

\item \textbf{Known-truth and practical evaluation:}
controlled experiments and MineDoseBench evaluate selective and local
coverage, treatment and target-population shift, spatial dependence,
measurement error, endpoint support, non-Gaussian dependence,
change of spatial support, exact--sparse approximation, refusal, and
coverage-matched efficiency. Secondary analyses examine treatment-density
estimation and compare GeoDose-CP with adapted continuous-treatment,
distribution-shift, and spatial/local conformal families.

\item \textbf{Real EO applicability:}
a multi-mine New South Wales (NSW) study evaluates geographically buffered
validation, EO-product quality, change of spatial support, and site/model
heterogeneity. Because the public archive does not provide an authentic
longitudinal rehabilitation treatment, treatment-dependent inference remains
nonoperational rather than being forced through a proxy exposure.

\end{itemize}

The evidence is intentionally complementary. Controlled experiments and
MineDoseBench provide known-truth evaluation of the complete
treatment--spatial architecture, whereas the NSW study tests whether real EO
data provide the treatment provenance, spatial support, and product quality
required for scientifically defensible inference. This separation evaluates
both inferential performance and whether an intervention query is
scientifically supportable.

The remainder of the paper develops the methodology
(Section~\ref{sec:method}), describes the evaluation design
(Section~\ref{sec:evaluation}), reports the results
(Section~\ref{sec:results}), discusses their implications and limitations
(Section~\ref{sec:discussion}), and concludes the study
(Section~\ref{sec:conclusion}).

\section{GeoDose-CP: Problem Formulation and Method}
\label{sec:method}

Figure~\ref{fig:architecture} summarizes GeoDose-CP from supported
intervention specification to graph-local reference-law construction,
exact or scalable conformal inference, and support-aware prediction or
refusal. The notation follows the methodological hierarchy developed
below: G1 is the finite-orbit conditional law, G2 the graph-local
target-orbit law, G3 exact weighted candidate inversion, N2 the sparse
graph approximation, and N3 the practical coverage-transfer layer.
Conditions C1--C5 define the causal-identification requirements.

\begin{figure*}[!t]
    \centering
    \includegraphics[width=\textwidth]{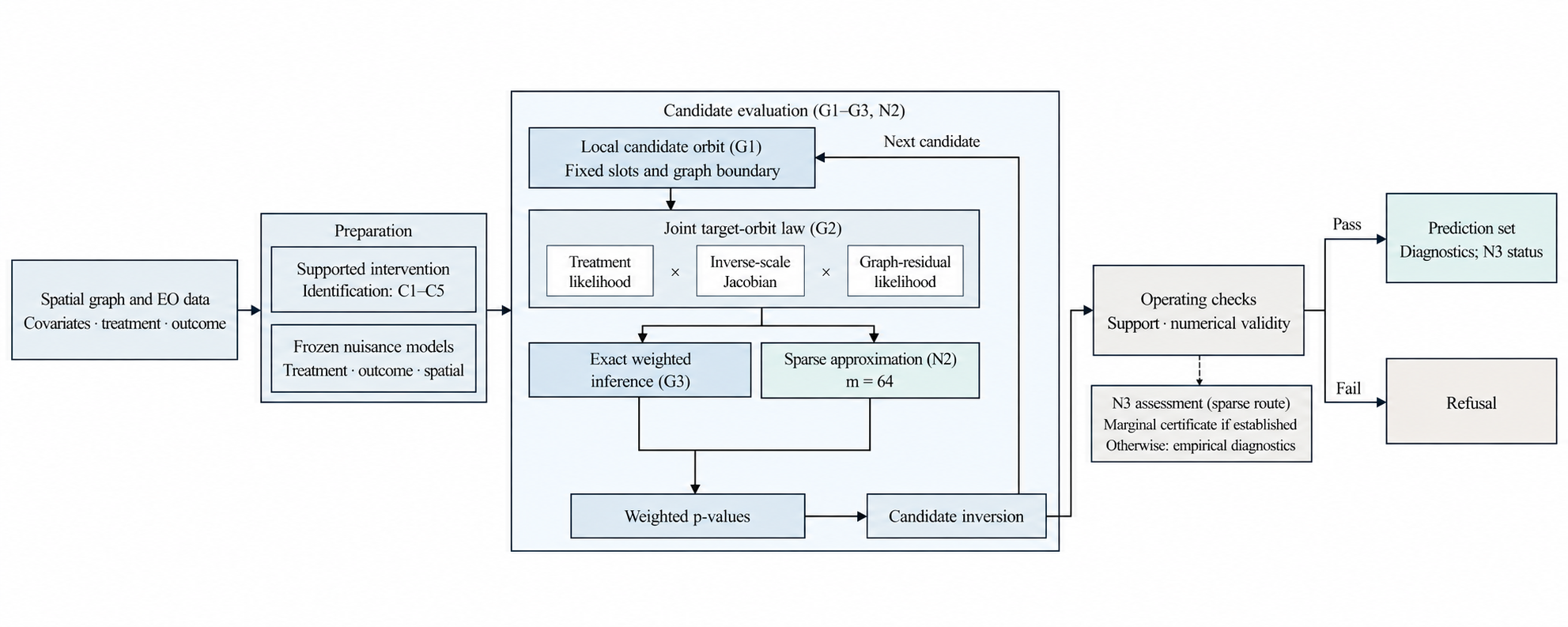}
    \caption{Operational workflow of GeoDose-CP. Spatial graph and EO data,
together with frozen nuisance models and the supported-intervention
conditions C1--C5, define the inputs to candidate evaluation. For each
candidate response, G1 constructs the local candidate orbit and G2 evaluates
the joint target-orbit law through the treatment likelihood, inverse-scale
Jacobian, and graph-residual likelihood. Inference then proceeds through
exact weighted inference (G3) or the sparse approximation (N2), followed by
weighted $p$-value calculation and candidate inversion. Support and numerical
checks, together with the applicable N3 assessment for the sparse route,
determine whether the query returns a prediction set with diagnostics or an
explicit refusal.}
    \label{fig:architecture}
\end{figure*}

\subsection{Supported Continuous-Dose Target and Identification}
\label{subsec:target}

Consider spatial units $i\in\mathcal{V}$ connected by a prespecified graph
$\mathcal{G}=(\mathcal{V},\mathcal{E})$. For each unit, $U_i$ denotes
pretreatment information, $A_i\in\mathcal{A}$ a continuous or mixed
continuous--atomic treatment, and $Y_i$ the observed outcome. The graph
encodes statistical spatial dependence for inference and does not itself
imply causal interference
\cite{MaoMartinReich2024SpatialPrediction,
HudgensHalloran2008Interference,
TchetgenVanderWeele2012Interference}.

GeoDose-CP targets a \emph{supported localized stochastic intervention}
around a scientifically specified dose $a_0$, rather than an unrestricted
deterministic potential outcome $Y(a_0)$. Let $\nu$ be a dominating
treatment measure containing Lebesgue mass on the continuous interior of
$\mathcal{A}$ and point masses only at genuine observational atoms. Both
the observational treatment law $g(a\mid u)$ and intervention law
$q_h(a\mid u;a_0)$ are defined with respect to this common mixed measure
$\nu$
\cite{Imbens2000DoseResponse,
ImaiVanDyk2004GeneralTreatment,
KennedyEtAl2017ContinuousTreatment}.

For a nonnegative localization kernel $\kappa_h$ with bandwidth $h>0$,

\begin{equation}
q_h(a\mid u;a_0)
=
\frac{
\kappa_h(a,a_0,u)
}{
\displaystyle
\int_{\mathcal{A}}
\kappa_h(s,a_0,u)\,d\nu(s)
}.
\label{eq:localized_intervention}
\end{equation}

The intervention is normalized with respect to the full mixed dominating
measure $\nu$. In the implemented analyses, its continuous component uses
Gaussian localization around $a_0$ on the supported interior $(0,1)$,
whereas exact endpoint queries can place mass only on genuine audited
observational atoms. Continuous treatment mass is never converted into
artificial endpoint mass.

The intervention is restricted to observational support:

\begin{equation}
q_h(a\mid u;a_0)>0
\quad\Longrightarrow\quad
g(a\mid u)>0.
\label{eq:positivity}
\end{equation}

Because \eqref{eq:positivity} is defined on the mixed measure $\nu$, an
endpoint is inferentially available only when it is a genuine observational
atom; unsupported endpoints cannot be justified by nearby continuous
support.

Let $\Pi^\star$ denote the intended deployment distribution and
$\chi(u,\mathcal{G})\in\{0,1\}$ an eligibility indicator determined from
pretreatment support, spatial design, and EO-quality information. Assuming
positive eligible mass,

\begin{equation}
d\Pi_{\mathrm{supp}}(u)
=
\frac{
\chi(u,\mathcal{G})\,d\Pi^\star(u)
}{
\displaystyle
\int
\chi(v,\mathcal{G})\,d\Pi^\star(v)
}.
\label{eq:supported_target}
\end{equation}

When transport from the observational population is required, assume
$\Pi_{\mathrm{supp}}\ll\Pi_{\mathrm{obs}}$ on the inferential support and
define

\begin{equation}
r(u)
=
\frac{
d\Pi_{\mathrm{supp}}
}{
d\Pi_{\mathrm{obs}}
}(u).
\label{eq:design_ratio}
\end{equation}

The target-design ratio $r(u)$ transports between populations and is
distinct from the treatment/intervention ratio $q_h/g$
\cite{QiuEtAl2023CovariateShift,
YangEtAl2024DoublyRobust,
QinEtAl2025CovariateShiftCausal}.

The supported target query is

\begin{equation}
U^\dagger\sim\Pi_{\mathrm{supp}},
\qquad
A^\dagger\sim q_h(\cdot\mid U^\dagger;a_0),
\qquad
Y^\dagger=Y^\dagger(A^\dagger).
\label{eq:target_generation}
\end{equation}

Let $P_0$ denote the resulting supported-target law and $\mathbb{E}_0$
expectation under $P_0$. Identification requires: \emph{C1} consistency,
$Y_i=Y_i(A_i)$; \emph{C2} supported conditional ignorability,
$Y_i(a)\perp A_i\mid U_i$ for
$\Pi_{\mathrm{supp}}$-almost every $u$ and
$q_h(\cdot\mid u;a_0)$-almost every supported $a$;
\emph{C3} positivity as in \eqref{eq:positivity};
\emph{C4} no material interference for the primary estimand; and
\emph{C5} stability of the relevant conditional outcome law between
observational and target regimes
\cite{KennedyEtAl2017ContinuousTreatment,
GiffinEtAl2023SpatialInterference,
PapadogeorgouEtAl2022SpatioTemporalCausal}.

Under C1--C5, for any integrable $\phi$,

\begin{equation}
\mathbb{E}_{0}
\!\left[
\phi(Y^\dagger)
\right]
=
\mathbb{E}_{\mathrm{obs}}
\left[
r(U)
\frac{
q_h(A\mid U;a_0)
}{
g(A\mid U)
}
\phi(Y)
\right].
\label{eq:identification}
\end{equation}

Equation~\eqref{eq:identification} identifies the supported stochastic
target law but does not establish conformal coverage. Causal identification
and conformal validity are therefore separate requirements
\cite{LeiCandes2021Counterfactuals,
QinEtAl2025CovariateShiftCausal}.

\subsection{Residual Representation and Failure of Naive Factorization}
\label{subsec:factorization}

Outcome nuisance functions are estimated from designated nuisance-training
data and held fixed during calibration. Let $\widehat{m}(u,a)$ and
$\widehat{\sigma}(u,a)>0$ denote the fitted location and scale functions
\cite{LeiEtAl2018PredictiveInference,
BarberEtAl2021JackknifePlus}. The standardized residual and primary
two-sided nonconformity score are

\begin{equation}
E_i
=
\frac{
Y_i-\widehat{m}(U_i,A_i)
}{
\widehat{\sigma}(U_i,A_i)
},
\qquad
S_i=|E_i|.
\label{eq:residual_score}
\end{equation}

The spatial dependence of
$E=(E_i:i\in\mathcal{V})$ is statistical and remains distinct from the
causal-interference condition C4
\cite{MaoMartinReich2024SpatialPrediction,
GiffinEtAl2023SpatialInterference}.

Suppose $w_{\mathrm{dose}}$ and $w_{\mathrm{spatial}}$ are constructed
separately to account for treatment shift and spatial calibration. A
seemingly natural combination is

\begin{equation}
w^{\mathrm{naive}}
\propto
w^{\mathrm{dose}}
w^{\mathrm{spatial}}.
\label{eq:naive_product}
\end{equation}

This factorization is not valid in general. Reassigning a
treatment--outcome payload changes the observational treatment likelihood,
its standardized residual through \eqref{eq:residual_score}, and the
graph-dependent residual likelihood simultaneously. The probability of a
candidate reassignment must therefore be evaluated under their joint law,
rather than reconstructed from independently derived marginal weights.

Product factorization is recovered only under additional structure that
separates the treatment and spatial contributions compatibly. Such
reductions are useful special cases, but good empirical performance of a
product-form procedure does not establish it as the correct reference law.
More generally, separate treatment and spatial marginals are insufficient
to recover the target--calibration coupling without the required graph-local
dependence structure. This motivates the joint graph-local target-orbit law
derived next.

\subsection{Exact Graph-Local Target-Orbit Law}
\label{subsec:target_orbit}

\subsubsection{Finite Local Orbit}

For target slot $t$, define the graph-local block

\begin{equation}
B=\{b_1,\ldots,b_{m_B},t\},
\label{eq:local_block}
\end{equation}

where $m_B$ is the number of calibration slots. Slot-specific
pretreatment information, graph locations, and design quantities remain
fixed, while treatment--outcome payloads are reassigned within $B$.

For candidate response $y$, the target outcome is first replaced by $y$,
after which permutations producing identical physical states are
quotiented. Let $\Omega_y$ denote the resulting set of distinct orbit
states. Let $\mathcal{I}_B$ denote the complete finite-orbit conditioning
information comprising the fixed slot and design information, payloads
outside $B$, and the unordered multiset of movable payloads within $B$.

\begin{lemma}[Finite-Orbit Conditional Law (G1)]
\label{lem:finite_orbit}
Assume the block payload law admits a density with respect to a product
dominating measure invariant under permutations of the block coordinates.
Conditional on $\mathcal{I}_B$, the probability of each distinct state
$\pi\in\Omega_y$ is proportional to its joint reference-law density.
Duplicate physical states are represented once; their stabilizer
multiplicities cancel after normalization.
\end{lemma}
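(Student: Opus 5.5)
The argument is the standard conditioning-on-the-unordered-multiset computation behind weighted exchangeability, specialized to a finite block with possibly tied payloads.

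\textbf{Setup.} Write the movable block payloads as $Z=(Z_{b_1},\ldots,Z_{b_{m_B}},Z_t)$, with $Z_j=(A_j,Y_j)$ and the target outcome set to $y$. Let $f$ be the joint reference-law density of $Z$ given the fixed slot/design information and the payloads outside $B$, taken with respect to $\mu^{\otimes(m_B+1)}$. Here $\mu$ is a product of $\nu$ and the outcome dominating measure, so $\mu^{\otimes(m_B+1)}$ is invariant under coordinate permutations $\sigma\in S_{m_B+1}$. Let $\mathcal{M}$ denote the unordered multiset $\{\!\{Z_j\}\!\}$. Conditioning on $\mathcal{I}_B$ is then conditioning on the fixed quantities together with $\mathcal{M}$. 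The claim is that $P(Z=\pi\mid\mathcal{I}_B)=f(\pi)/\sum_{\pi'\in\Omega_y}f(\pi')$ for $\pi\in\Omega_y$.

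\textbf{Step 1 (symmetrization).} For measurable $\phi$ and any symmetric measurable $\psi$ of the multiset, change variables $z\mapsto\sigma z$. The dominating measure is permutation invariant, so $\mathbb{E}[\phi(Z)\psi(\mathcal{M})]=\int \phi(\sigma z)f(\sigma z)\psi(\mathcal{M}(z))\,d\mu^{\otimes}(z)$ for every $\sigma$. Averaging over all $\sigma\in S_{m_B+1}$ gives
\[
\mathbb{E}[\phi(Z)\psi(\mathcal{M})]=\mathbb{E}\!\left[\psi(\mathcal{M})\frac{\sum_{\sigma}\phi(\sigma Z)f(\sigma Z)}{\sum_{\sigma}f(\sigma Z)}\right].
\]
The last ratio is itself a function of $\mathcal{M}$. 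Hence, almost surely,
\[
\mathbb{E}[\phi(Z)\mid\mathcal{I}_B]=\frac{\sum_\sigma \phi(\sigma Z)f(\sigma Z)}{\sum_\sigma f(\sigma Z)},
\]
with the ratio set to zero on the null set where the denominator vanishes.

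\textbf{Step 2 (quotienting duplicates).} Group the permutations $\sigma$ by the physical state $\sigma Z$ they produce. Each distinct state $\pi\in\Omega_y$ is hit by exactly $|\mathrm{Stab}(Z)|$ permutations, where $\mathrm{Stab}(Z)$ is the stabilizer of $Z$. This count is the same for every state in the orbit, since stabilizers of points in a common orbit are conjugate. The sums therefore become $|\mathrm{Stab}|\sum_{\pi\in\Omega_y}\phi(\pi)f(\pi)$ and $|\mathrm{Stab}|\sum_{\pi\in\Omega_y}f(\pi)$, and the factor $|\mathrm{Stab}|$ cancels. Taking $\phi=\mathbf{1}\{\cdot=\pi\}$ then yields the stated proportionality, with each duplicate physical state counted once.

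\textbf{Main obstacle.} The delicate point is justifying Step 1 when the payloads have atoms in $A$ from the mixed measure $\nu$, so that ties occur with positive probability. The density $f$ must be defined with respect to a permutation-invariant product measure that carries these atoms. The conditional law given $\mathcal{M}$ must also be handled via the symmetric $\sigma$-field rather than through continuous disintegration. I would therefore work throughout with the $\sigma$-field of permutation-invariant events, which is exactly what the symmetrization identity uses. That identity needs only invariance of the dominating measure and no continuity. The constancy of the stabilizer size across the orbit in Step 2 is then what makes the cancellation valid under ties.
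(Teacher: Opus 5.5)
Your proposal is correct and follows essentially the same route as the paper's G1 argument: condition on the symmetric $\sigma$-field of the block multiset, and use permutation invariance of the product dominating measure to weight each reassignment by its reference-law density, then group permutations into distinct physical states so that the common stabilizer count cancels on normalization. One minor refinement to your ``main obstacle'' remark: atoms of $\nu$ in $A$ alone do not produce payload ties $(A_j,Y_j)=(A_k,Y_k)$ with positive probability when the outcome law is continuous---such ties also need an outcome coincidence, for example an atomic outcome law, or a candidate $y$ that exactly matches a calibration payload at an atomic dose---but your stabilizer argument handles arbitrary ties, so the conclusion is unaffected.
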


Thus, conditional inference reduces to comparison over a finite set of
distinct graph-local payload assignments.

\subsubsection{Graph-Local Target Law}

For $\pi\in\Omega_y$, let
$(A_{\pi(i)},Y_{\pi(i)}(y))$ denote the payload assigned to slot $i$ and
define

\begin{equation}
E_i^\pi(y)
=
\frac{
Y_{\pi(i)}(y)
-
\widehat{m}\!\left(U_i,A_{\pi(i)}\right)
}{
\widehat{\sigma}\!\left(U_i,A_{\pi(i)}\right)
}.
\label{eq:state_residual}
\end{equation}

Let $\mathcal{C}$ denote the residual-graph clique collection and
$\psi_C$ the potential associated with clique $C$.

\begin{theorem}[Graph-Local Target-Orbit Law (G2)]
\label{thm:graph_local}
Assume the supported mixed-measure treatment law of
Section~\ref{subsec:target}, with treatment densities defined with respect
to the common mixed measure $\nu$, a slot-factorized observational
treatment law, and a positive standardized-residual density that, conditional on
the frozen pretreatment and design information, is invariant to
treatment after the specified location--scale transformation and
factorizes over the prespecified graph. Conditional on $\mathcal{I}_B$, the target weight of
$\pi\in\Omega_y$ satisfies

\begin{equation}
\begin{aligned}
W_\pi(y)
\propto\;&
q_h\!\left(A_{\pi(t)}\mid U_t;a_0\right)
\prod_{i\in B\setminus\{t\}}
g\!\left(A_{\pi(i)}\mid U_i\right)
\\
&\times
\prod_{i\in B}
\frac{1}{
\widehat{\sigma}\!\left(U_i,A_{\pi(i)}\right)}
\prod_{\substack{
C\in\mathcal{C}:\\
C\cap B\neq\varnothing
}}
\psi_C\!\left(E_C^\pi(y)\right).
\end{aligned}
\label{eq:graph_local_weight}
\end{equation}

Residual factors supported entirely outside $B$ are orbit-invariant and
cancel after normalization; hence only the moved block and its incident
graph boundary contribute to the spatial term.
\end{theorem}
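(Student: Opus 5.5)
The plan is to reduce Theorem~\ref{thm:graph_local} to Lemma~\ref{lem:finite_orbit}. Conditional on $\mathcal{I}_B$, that lemma says each distinct state $\pi\in\Omega_y$ has probability proportional to the joint reference-law density of the block payloads evaluated at $\pi$. The work is therefore to write this density explicitly under the stated assumptions and then discard every factor that does not depend on $\pi$.

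The reference law is the hybrid law in which the target slot $t$ carries the supported target generation \eqref{eq:target_generation} and the calibration slots carry the observational law. The joint density of the payloads $(A_{\pi(i)},Y_{\pi(i)}(y))_{i\in B}$, together with the payloads outside $B$, is taken with respect to the product measure $\nu^{\otimes}\otimes\mathrm{Leb}^{\otimes}$. This measure is permutation invariant, as Lemma~\ref{lem:finite_orbit} requires. I factor this density in two stages.

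\emph{Treatment factor.} Using slot factorization of the treatment law, the treatment part is $q_h(A_{\pi(t)}\mid U_t;a_0)\prod_{i\in B\setminus\{t\}} g(A_{\pi(i)}\mid U_i)$. Slots outside $B$ contribute fixed $g$ factors, which cancel. The design ratio $r(U_t)$ depends only on the fixed slot information $U_t$, so it is orbit-invariant and also drops.

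\emph{Outcome factor.} Conditional on treatments and pretreatment data, I change variables from outcomes to standardized residuals via \eqref{eq:state_residual}. The map $Y_i\mapsto E_i=(Y_i-\widehat m(U_i,A_i))/\widehat\sigma(U_i,A_i)$ is affine, with Jacobian $1/\widehat\sigma(U_i,A_i)$. By the invariance assumption, the outcome density equals $f_E(E^\pi(y))\prod_i \widehat\sigma(U_i,A_{\pi(i)})^{-1}$, where $f_E$ does not depend on the treatments. Moving payloads outside $B$ leaves those $\widehat\sigma$ factors fixed, so they cancel.

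Next I invoke the graph factorization (Hammersley--Clifford form, using positivity): $f_E(e)\propto\prod_{C\in\mathcal{C}}\psi_C(e_C)$. Let $\pi$ act only within $B$. For any clique with $C\cap B=\varnothing$, the vector $E_C^\pi(y)$ involves only slots outside $B$, whose payloads and covariates are both fixed. Hence $\psi_C$ is constant over $\Omega_y$ and cancels upon normalization. Multiplying the three surviving pieces gives \eqref{eq:graph_local_weight}.

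The main obstacle is justifying the conditional-density step for the mixed measure. Lemma~\ref{lem:finite_orbit} requires a density relative to a permutation-invariant product measure, so I must check two things. First, the Jacobian is computed only in the $Y$ coordinates, since $\nu$ has atoms in $A$ but the affine map acts on Lebesgue $Y$ for fixed $A$. Second, swapping payloads between slots that have an atom at the same point keeps the product measure invariant. I would handle this by conditioning on $A$ first and applying the change of variables fiberwise. Stabilizer multiplicities of duplicate states are already handled by the lemma. A secondary subtlety is that $\widehat m,\widehat\sigma$ must be frozen, meaning independent of the calibration data. This ensures the invariance assumption is a statement about a fixed transformation, and I would state this explicitly.
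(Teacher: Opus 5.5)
Your proposal is correct and follows essentially the same route as the paper's proof sketch. You use G1 to weight each distinct orbit state by its joint reference-law density, factor that density into the slot-wise treatment term ($q_h$ at $t$, $g$ at the calibration slots), the inverse-scale Jacobian from the affine map $Y\mapsto E$, and the clique-factorized residual density, and then cancel the orbit-invariant factors (cliques disjoint from $B$, fixed terms outside $B$, and $r(U_t)$). Your extra care over doing the change of variables fiberwise in $Y$ for fixed $A$ under the mixed measure $\nu$, and over freezing $\widehat m,\widehat\sigma$, supplies details the sketch defers to the supplement without changing the argument.
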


Equation~\eqref{eq:graph_local_weight} is the central GeoDose-CP
reference law. It couples three quantities that generally cannot be
constructed independently: the intervention/observational treatment
likelihood, the inverse outcome-scale Jacobian, and the graph-local
residual likelihood. The Jacobian arises from the transformation
$Y\mapsto E$ in \eqref{eq:state_residual} and is required whenever
$\widehat{\sigma}$ changes across orbit states.

The target-design factor $r(U_t)$ is constant within the conditioned
orbit and therefore cancels from normalized orbit probabilities, although
it remains relevant for target-population transport and scalable coverage
accounting. Likewise, the exact treatment term requires $g(A\mid U)$,
or a theorem-equivalent joint assignment likelihood; a
target-versus-observational classifier estimates a different density
ratio and cannot replace $g$ in the exact target-orbit law. Dependent
treatment assignment would instead require the corresponding joint
assignment likelihood.

\noindent\emph{Proof sketch:}
By Lemma~\ref{lem:finite_orbit}, each distinct orbit state is weighted by
its joint reference-law density. Re-expressing moved outcomes as
standardized residuals contributes the inverse-scale Jacobian, while graph
factorization expresses the residual density through clique potentials.
Factors unaffected by the reassignment cancel after normalization,
leaving exactly the treatment, Jacobian, and boundary-touching graph terms
in \eqref{eq:graph_local_weight}. Complete proofs of G1--G3 are provided
in Supplementary Appendix~A.

\subsubsection{Gaussian Markov Specialization}

For a proper Gaussian Markov random field with sparse precision matrix
$Q$, partition the residual vector into the moved block $B$ and its
required external boundary $D$. Up to orbit-invariant terms,

\begin{equation}
\log L_{\mathcal{G}}^\pi(y)
=
-\frac{1}{2}
\left(E_B^\pi(y)\right)^{\mathsf T}
Q_{BB}E_B^\pi(y)
-
\left(E_B^\pi(y)\right)^{\mathsf T}
Q_{BD}E_D.
\label{eq:gmrf_local}
\end{equation}

The Gaussian normalizing term is orbit-invariant and therefore cancels.
Only sparse operations involving $B$ and its boundary are required, so
inversion of the full spatial covariance matrix is unnecessary.
Equation~\eqref{eq:gmrf_local} is a computational specialization of
Theorem~\ref{thm:graph_local}; the G2 result itself is non-Gaussian and
requires only the stated positive graph-factorized residual law.

\subsection{Exact Candidate Inversion and Scalable Coverage}
\label{subsec:coverage}

\subsubsection{Exact Weighted Candidate Inversion}

Let $W_\pi(y)$ be the unnormalized target-orbit weight in
\eqref{eq:graph_local_weight}. Define

\begin{equation}
\overline{W}_\pi(y)
=
\frac{
W_\pi(y)
}{
\displaystyle
\sum_{\pi'\in\Omega_y}
W_{\pi'}(y)
}.
\label{eq:normalized_weight}
\end{equation}

Let $T_\pi(y)$ denote the target-slot nonconformity score under state
$\pi$, with the primary score induced by \eqref{eq:residual_score}, and
let $\pi_{\mathrm{obs}}$ denote the realized assignment. The weighted
upper-tail p-value is

\begin{equation}
p_y
=
\sum_{\pi\in\Omega_y}
\overline{W}_\pi(y)
\mathbf{1}
\left\{
T_\pi(y)
\geq
T_{\pi_{\mathrm{obs}}}(y)
\right\}.
\label{eq:weighted_pvalue}
\end{equation}

Including structural ties in the upper tail gives, under the exact
target-orbit law,

\begin{equation}
\Pr_{0}
\left\{
p_{Y^\dagger}\leq\alpha
\mid
\mathcal{I}_B
\right\}
\leq
\alpha,
\label{eq:exact_validity}
\end{equation}

where $\mathcal{I}_B$ is the finite-orbit conditioning information defined
in Section~\ref{subsec:target_orbit}. Candidate inversion yields

\begin{equation}
\widehat{C}_\alpha
=
\left\{
y\in\mathcal{Y}:p_y>\alpha
\right\}.
\label{eq:prediction_set}
\end{equation}

The raw set may be disconnected; all components are retained. Any convex
hull shown for visualization is presentation-only and does not replace the
set in \eqref{eq:prediction_set}.

Equations~\eqref{eq:weighted_pvalue}--\eqref{eq:prediction_set} define
G3. Once the correct target-orbit reference law is established, candidate
inversion follows generalized conformal validity machinery; the
GeoDose-specific structural contribution is therefore G2 in
Theorem~\ref{thm:graph_local}
\cite{BarberTibshirani2026Unifying,
PrinsterEtAl2024AnyDistribution}.

\subsubsection{Sparse Graph Approximation}

Exact orbit evaluation is feasible for small local blocks. For larger
domains, GeoDose-CP uses a Vecchia-type sparse conditional approximation
\cite{KatzfussGuinness2021Vecchia}. Under a deterministic ordering, let
$P_i$ be the full predecessor set of residual $E_i$ and
$N_i\subseteq P_i$ the retained neighborhood:

\begin{equation}
p(E)
=
\prod_i p(E_i\mid E_{P_i}),
\qquad
q(E)
=
\prod_i p(E_i\mid E_{N_i}).
\label{eq:sparse_factorization}
\end{equation}

The information omitted by truncation is

\begin{equation}
D_{\mathrm{KL}}(p\|q)
=
\sum_i
I
\left(
E_i;
E_{P_i\setminus N_i}
\mid
E_{N_i}
\right).
\label{eq:omitted_information}
\end{equation}

Let $L(U)$ denote the corresponding target-local information loss.
Assuming finite target-weighted information,

\begin{equation}
\mathbb{E}_{\mathrm{obs}}
\left[
r(U)L(U)
\right]
=
\mathbb{E}_{\Pi_{\mathrm{supp}}}
\left[
L(U)
\right]
<\infty,
\label{eq:sparse_integrability}
\end{equation}

and define

\begin{equation}
\Delta_{\mathrm{sp}}
=
\mathbb{E}_{\Pi_{\mathrm{supp}}}
\!\left[
L(U)
\right]
=
\mathbb{E}_{\mathrm{obs}}
\!\left[
r(U)L(U)
\right].
\label{eq:sparse_discrepancy}
\end{equation}

Pinsker's inequality then gives

\begin{equation}
\delta_{\mathrm{sparse}}
\leq
\sqrt{
\frac{\Delta_{\mathrm{sp}}}{2}
}.
\label{eq:pinsker}
\end{equation}

This target-weighted sparse approximation and its associated
coverage-transfer step constitute N2.

\subsubsection{Practical Marginal Coverage}

The implemented reference law may additionally differ from the ideal
supported-target law through structural misspecification, nuisance
estimation, and numerical approximation. Let
$\delta_{\mathrm{mis}}$, $\delta_{\mathrm{est}}$, and
$\delta_{\mathrm{comp}}$ denote valid total-variation bounds for these
transitions. KL divergence controls the sparse approximation through
Pinsker's inequality; subsequent law discrepancies are combined on the
total-variation scale.

The resulting law ladder is

\begin{equation}
P_0
\longrightarrow
P_{\mathrm{sparse}}
\longrightarrow
P_{\mathrm{struct}}
\longrightarrow
P_{\mathrm{fit}}
\longrightarrow
P_{\mathrm{comp}},
\label{eq:law_ladder}
\end{equation}

with each discrepancy charged once. If the joint nuisance/certificate
good event fails with probability at most $\eta$, then

\begin{equation}
\begin{aligned}
\Pr_{0}
\left\{
Y^\dagger \in \widehat{C}_{\alpha}
\right\}
\geq\;&
1-\alpha
-\delta_{\mathrm{sparse}}
-\delta_{\mathrm{mis}}
\\
&-\delta_{\mathrm{est}}
-\delta_{\mathrm{comp}}
-\eta .
\end{aligned}
\label{eq:practical_coverage}
\end{equation}

Equation~\eqref{eq:practical_coverage} defines N3 and is a
\emph{marginal} coverage-transfer statement. We use the term
\emph{N3 certificate} only when the displayed discrepancy bounds and
good-event probability are formally established. When fitted treatment,
outcome, spatial, or computational components lack the corresponding
finite-sample controls, the resulting N3 quantities are reported as
empirical diagnostics rather than theorem-certified coverage guarantees.
Conditional coverage would require corresponding conditional
total-variation control
\cite{GibbsCherianCandes2025Conditional}. For the exact route,
$\delta_{\mathrm{sparse}}=0$; for the scalable route it is controlled by
\eqref{eq:sparse_discrepancy}--\eqref{eq:pinsker}.
\subsection{Support-Aware Refusal}
\label{subsec:refusal}

GeoDose-CP returns either a supported prediction set or an explicit
refusal when the required target law cannot be justified. Applicable
gates assess intervention positivity, genuine endpoint support,
effective-sample-size and weight-concentration diagnostics, graph-safe
calibration support, connected local geometry, EO-quality eligibility,
and nonvacuity of the applicable N3 certificate or prespecified empirical
diagnostic criterion.

Accordingly,

\begin{equation}
\text{GeoDose-CP output}
\in
\left\{
\widehat{C}_\alpha,
\;
\mathrm{refusal}
\right\}.
\label{eq:refusal_output}
\end{equation}

Refusal is an inferential outcome rather than missing data or noncoverage.
Unsupported treatment values are not extrapolated, unsupported endpoint
atoms are not created from nearby continuous mass, and treatment-dependent
queries are not constructed when the scientifically required treatment
variable is unavailable. Effective sample size, weight concentration,
graph support, and endpoint checks remain operational support diagnostics
rather than general coverage guarantees. The method-specific operating
thresholds used in evaluation are reported in
Section~\ref{sec:evaluation}.

\subsection{GeoDose-CP Inference Algorithm}
\label{subsec:algorithm}

Let $\mathcal{F}$ collect the fitted nuisance models, graph, calibration
data, support rules, intervention specification, numerical settings, and
prespecified inferential route used for a query. For scalable inference,
$\mathcal{F}$ also records whether the applicable N3 assessment is a
formally established certificate or a prespecified empirical diagnostic.
Algorithm~\ref{alg:geodose} summarizes GeoDose-CP inference.

\begin{algorithm}[!t]
\caption{GeoDose-CP Query Inference}
\label{alg:geodose}
\footnotesize
\begin{algorithmic}[1]

\Require Target $t$, dose $a_0$, level $\alpha$, domain $\mathcal{Y}$,
and query objects $\mathcal{F}$

\If{an applicable eligibility or graph-support gate fails}
    \State \Return \textsc{Refuse}
\EndIf

\State Construct $q_h(\cdot\mid U_t;a_0)$ and local block $B_t$

\If{an applicable positivity, endpoint, or operating-support gate fails}
    \State \Return \textsc{Refuse}
\EndIf

\For{each candidate $y$ in the prespecified inversion routine}

    \State Replace the target response by $y$, then construct $\Omega_y$

    \State Compute candidate residuals using
    \eqref{eq:state_residual}

    \If{exact route}
        \State Evaluate target-orbit weights using
        \eqref{eq:graph_local_weight}
    \Else
        \State Evaluate the N2 sparse approximation
    \EndIf

    \State Normalize weights and compute $p_y$ using
    \eqref{eq:weighted_pvalue}

\EndFor

\If{a required numerical or computational check fails}
    \State \Return \textsc{Refuse}
\EndIf

\State Construct
$\widehat{C}_{\alpha}
=
\{y\in\mathcal{Y}:p_y>\alpha\}$

\If{sparse route}
    \State Assess the applicable N3 certificate or empirical diagnostic
    \If{the prespecified N3 operating criterion fails or is vacuous}
        \State \Return \textsc{Refuse}
    \EndIf
\EndIf

\State \Return $\widehat{C}_{\alpha}$, support diagnostics, and N3 status

\end{algorithmic}
\end{algorithm}

Algorithm~\ref{alg:geodose} does not require a particular outcome-prediction
architecture once the fitted location and scale functions are learned from
the designated nuisance-training data and held fixed during calibration.
In the empirical analyses, RF/XGBoost outcome models and fitted treatment
and spatial nuisance models instantiate these components. Their associated
N3 quantities are interpreted as empirical diagnostics unless the
finite-sample discrepancy controls required by
\eqref{eq:practical_coverage} are separately established.

\section{Data and Evaluation Design}
\label{sec:evaluation}

\subsection{Evidence Architecture and Data Substrate}
\label{subsec:data_substrate}

The evaluation uses three complementary evidence layers: controlled
experiments test the inferential mechanism under known treatment,
counterfactual, and spatial laws; MineDoseBench combines real mine geometry
and pretreatment environmental context with generated counterfactual truth;
and the New South Wales (NSW) study examines real EO applicability where
counterfactual truth is unavailable. Together, they address controlled
validity, realistic known-truth performance, and real-data applicability.

A common spatial substrate was constructed for Bulga Complex, Hunter Valley
Operations, Liddell Coal, Mt Arthur Coal, and Mount Thorley Warkworth
Complex from the NSW mine-rehabilitation archive
\cite{NSWResources2024MineRehabilitation}. Rehabilitation and disturbance
polygons defined the analysis footprint but were not treated as the
longitudinal rehabilitation exposure required by the causal estimand. A
globally anchored $90\,\mathrm{m}\times90\,\mathrm{m}$ lattice in
EPSG:9473 retained cells with at least 70\% supported area; queen
contiguity yielded 27,042 cells and 102,893 undirected within-mine graph
edges.

EO variables were derived from Digital Earth Australia (DEA) Landsat
Fractional Cover Collection 3 and DEA Water Observations
\cite{Lymburner2021DEAFractionalCover,
GeoscienceAustralia2022DEAWaterObservations}. Photosynthetic vegetation
(PV), unmixing error (UE), and water status were extracted for 2023--2025.
Acquisitions required at least 70\% valid and dry supported pixels, were
composited by annual pixel median, and were summarized within cells by
spatial median. Annual eligibility required at least three accepted
observations, at least 70\% valid composite pixels, finite PV and UE, water
fraction $\leq0.50$, and no more than 50\% of valid pixels with
UE$\geq25$. The resulting archive contains 81,126 block--year records, with
23,610 cells satisfying the complete three-year EO-quality criteria.
Mt Arthur Coal, Hunter Valley Operations, and Bulga Complex were selected
for the real-data study using this quality screen without reference to PV
level or PV change.

MineDoseBench adds 13 pretreatment environmental variables: four climate,
five soil, and four terrain attributes. Climate variables comprise annual
rainfall and short-crop evapotranspiration for 2023 and 2024 from SILO
\cite{QueenslandGovernmentSILO,JeffreyEtAl2001SILO}. Soil variables comprise
topsoil organic carbon, CaCl$_2$ pH, clay fraction, bulk density, and
available volumetric water capacity
\cite{WadouxEtAl2024SLGAOrganicCarbon,
MaloneSearle2024SLGApH,
MaloneSearle2022SLGAClay,
Malone2023SLGABulkDensity,
SearleEtAl2023SLGAAWC}. Terrain variables comprise elevation, slope, aspect,
and surface roughness derived from Australian SRTM elevation products
\cite{GallantEtAl2011SRTMDEM}. Raster quantities were aggregated by polygon
area weighting, and bulk density used the authoritative Release-2 unit of
$\mathrm{g\,cm^{-3}}$
\cite{Malone2023SLGABulkDensity}.

Environmental-context completeness was assessed independently of EO
eligibility. Of 23,710 benchmark cells requiring context information, the
prespecified 0.99 valid-area criterion retained 23,618 and excluded 92,
with no missing-value imputation. The 23,618 context-complete cells and
23,610 three-year EO-quality cells therefore represent distinct eligibility
screens.

Environmental context and pretreatment EO/history variables formed 26
permitted MineDoseBench nuisance predictors. Outcome-year and
outcome-derived variables, oracle or hidden states, sample-role indicators,
and mapped rehabilitation variables were excluded. Nuisance training,
support construction, calibration, and final testing were spatially
separated, with scenario definitions, targets, and operating rules specified
before performance evaluation.
\subsection{Controlled Validation and MineDoseBench}
\label{subsec:controlled_mdb}

Controlled known-truth experiments first isolate the inferential mechanism
before introducing realistic mine geometry and environmental heterogeneity.
The suite comprises 3,000 primary replications, 600 fixed-resolution
increasing-domain evaluations, and 590 stress evaluations, totaling 4,190
case replications and 52,740 scientific queries. Conditions vary treatment
and target-population shift, spatial dependence, overlap, residual structure,
graph information, measurement error, endpoint support, and calibration
information.

The increasing-domain experiment uses $17\times17$, $25\times25$, and
$35\times35$ lattices at fixed 90-m resolution, varying spatial extent
rather than analytical support. Candidate inversion is restricted to the
numerical domain $[-8,8]$, which is a computational range rather than a
physical fractional-cover domain.

MineDoseBench provides the primary realistic known-truth evaluation. It
retains NSW mine geometry, graph structure, pretreatment EO history, and
environmental context while generating treatment, potential outcomes,
spatial residual dependence, and target-population shift under an auditable
data-generating process. The benchmark response is fractional-cover change
on the physical domain

\begin{equation}
\mathcal{Y}_{\mathrm{MDB}}=[-1,1],
\label{eq:mdb_domain}
\end{equation}

with truth generated directly within this range rather than imposed by
post-hoc clipping.

MineDoseBench is constructed around the joint treatment--spatial problem:
mixed continuous--atomic treatment with genuine endpoint atoms, nonlinear
dose response and effect modification, graph-dependent residual variation,
and independently controlled target-population shift. Baseline
configurations satisfy the stated identification and reference-law
conditions, whereas stress configurations perturb overlap, treatment and
endpoint support, spatial and residual structure, calibration information,
measurement quality, and analytical support. Hidden spatial confounding is
included solely as a negative control and does not satisfy the
conditional-ignorability requirement for causal interpretation.

The treatment mechanism is known by construction in the controlled
experiments. In the registered MineDoseBench RF-primary pipeline, however,
the observational treatment law $g(A\mid U)$ is estimated from designated
nuisance-training data using the frozen mixed-propensity model
$\widehat g(A\mid U)$. The known MineDoseBench treatment law is retained
only for the separate treatment-density sensitivity analysis in
Section~\ref{subsec:sensitivity_comparators}, where oracle, estimated, and
deliberately misspecified treatment models are compared while the remaining
pipeline components are held fixed. This separation allows structural
behavior to be examined under known treatment mechanisms while evaluating
the primary realistic benchmark with a fitted treatment model.

MineDoseBench contains 27 configurations with 20 independent replications
each (540 tasks) and five ranked target cells in each of five mines per
configuration and replication, yielding 13,500 primary target identities.
Change of spatial support is evaluated on an independently anchored 180-m
substrate containing 7,295 cells, of which 5,788 satisfy the eligibility
criteria; retained 180-m cells require the corresponding available 90-m
children to be eligible. Additional configurations assess non-Gaussian
dependence, measurement error, reduced information, endpoint support,
combined perturbations, and cross-mine transfer. The three-year temporal
fixture is used only as a structural stress test and is not interpreted as
evidence of temporal-conformal validity.
\subsection{Methods, Operating Rules, and Evaluation Metrics}
\label{subsec:metrics}

The primary MineDoseBench comparison comprises five mechanism-aligned
procedures: dose-only weighted conformal prediction (M2), spatial-only
residual calibration (M3), naive treatment--spatial factorization (M4),
the treatment-weighted graph-safe baseline (M5), and full GeoDose-CP (M6).
M5 is a deterministic target-separated graph-safe weighted conformal
procedure using treatment weights on a pre-outcome eligible graph-safe
calibration subset. Ordinary split conformal prediction (M1) is retained
where informative but is not part of the primary M2--M6 comparison.

The spatial comparator M3-CRE reassigns standardized residual payloads
$E$ while holding treatment $A$, pretreatment information $U$, and graph
slots fixed, thereby isolating spatial recalibration without treatment
transport. M4 multiplies the M2 treatment marginal by the corresponding
M3 spatial marginal and normalizes once, providing the deliberately
factorized comparator to the joint graph-local construction in M6. Thus,
M2--M6 separate treatment shift, spatial recalibration, naive
factorization, graph-safe weighted calibration, and the joint
treatment--spatial target law. Published continuous-treatment,
distribution-shift, and spatial/local conformal families are evaluated
separately below.

Random forest (RF) is the primary nuisance-model track across all 27
configurations. XGBoost is a prespecified confirmation track restricted to
the S1/S4/S5/S8 subset and is not treated as a second complete benchmark
analysis. For treatment-dependent procedures in the RF-primary benchmark,
the observational treatment law is the fitted mixed-propensity model
$\widehat g(A\mid U)$ described in
Section~\ref{subsec:controlled_mdb}; the known oracle law is used only in
the separate treatment-density sensitivity analysis. The inferential level
and localization bandwidth are

\begin{equation}
\alpha=0.10,
\qquad
h=0.10.
\label{eq:primary_settings}
\end{equation}

The common numerical operating thresholds are

\begin{equation}
\mathrm{ESS}\geq3,
\qquad
\max_i \overline{w}_i\leq0.50,
\qquad
n_{\mathrm{GS}}\geq20,
\label{eq:operating_gates}
\end{equation}

where $\overline{w}_i$ is the applicable normalized operating weight and
$n_{\mathrm{GS}}$ is the graph-safe calibration-support count. These
thresholds were selected in a separate 20-replication pilot and frozen
before the independent controlled-production and MineDoseBench
evaluations; production outcomes were not used to retune them. They are
method-specific stability diagnostics rather than universal coverage
conditions. M2 uses treatment-weight ESS and maximum-weight conditions;
M3 uses graph-safe support without treatment-weight refusal; M4 and M5
apply their applicable weight and graph-support conditions; and M6 combines
the applicable weight and graph-support gates with a finite, positive N3
operating requirement. Positivity, endpoint, structural-support, and
computational checks are likewise applied only where defined. ESS and
weight concentration therefore diagnose support and stability rather than
establish coverage or treatment-model correctness by themselves.

The primary scalable M6 route uses the prospectively frozen sparse
neighborhood size $m=64$; additional neighborhood-size sensitivity is
reported in the Supplement. RF/XGBoost outcome models and fitted treatment
and spatial nuisance models provide practical implementations. In the
primary MineDoseBench implementation, the standardized outcome-residual
transformation uses $\widehat{\sigma}(U,A)\equiv1$; consequently, the
inverse-scale Jacobian in the G2 reference law is retained structurally
but equals one numerically in this benchmark. The spatial residual model
retains its separately fitted dispersion parameter. Associated N3
quantities are interpreted as empirical diagnostics unless the
finite-sample discrepancy controls required by (23) are separately
established.

The exact--sparse audit uses size-6 local blocks containing one target and
five distinct calibration slots. It contains 2,700 target identities and
8,100 method--target combinations across M3, M4, and M6; evaluating each
under both the full-graph exact reference and the sparse $m=64$ reference
yields 16,200 audit rows. Full finite-domain inversion uses 50 target
identities spanning
$\rho\in\{0,0.2,0.4,0.6,0.8\}$, with 10 identities per dependence
level. Evaluation across M3, M4, and M6 gives 150 full-inversion records.
For the corresponding efficiency-only comparison, M2 and M5 contribute
their closed-form returned intervals on the same registered target
identities, intersected with the benchmark response domain
$\mathcal{Y}_{\mathrm{MDB}}=[-1,1]$. The resulting width comparisons
therefore constitute a targeted structural-efficiency audit rather than
population-level efficiency evidence.

For query $q$, let $R_q=1$ indicate a returned prediction set and
$R_q=0$ a refusal. Selective coverage and return rate are

\begin{equation}
\widehat{\mathrm{Cov}}_{\mathrm{sel}}
=
\frac{
\sum_{q=1}^{Q}
R_q
\mathbf{1}
\left\{
Y_q^\dagger\in\widehat C_q
\right\}
}{
\sum_{q=1}^{Q}R_q
},
\qquad
\widehat r
=
\frac{1}{Q}
\sum_{q=1}^{Q}R_q.
\label{eq:selective_coverage}
\end{equation}

Refusals are reported separately and are not recoded as noncoverage.
Evaluation focuses on selective coverage, return/refusal, target-local
coverage, coverage-matched width, exact--sparse discrepancy, endpoint
behavior, and false-support behavior, where false support denotes returned
inference under a known benchmark support failure. For each prespecified
local reporting stratum, selective coverage is computed among returned
queries using the same convention as above; strata with no returned sets
have undefined selective coverage and remain represented through
return/refusal rather than being treated as coverage failures. Local
robustness is summarized by $q_{0.05}$, the fifth percentile of the
selective-coverage values across the prespecified mine and spatial-rank
cells. Dose-bin summaries are retained separately and are not included in
$q_{0.05}$. This quantity is a lower-tail diagnostic for concentrated
local failure and is not interpreted as exact conditional coverage
\cite{Guan2023LocalizedConformal,
HoreBarber2025LocalWeights,
GibbsCherianCandes2025Conditional}.

Configuration-level coverage uncertainty is summarized by 95\%
cluster-bootstrap intervals using replication-by-mine clusters as the
resampling units. Thus, the five target queries within a given
replication--mine cluster are kept together rather than treated as
independent bootstrap units. The lower endpoint of the resulting interval
is used as the reported configuration-level bootstrap lower bound. The
controlled study additionally evaluates 17 prespecified lower-tail coverage
groups using Bonferroni-adjusted undercoverage tests.

Prediction-set width is interpreted as an efficiency comparison only when

\begin{equation}
\left|
\widehat{\mathrm{Cov}}_{\mathrm{sel},j}
-
\widehat{\mathrm{Cov}}_{\mathrm{sel},k}
\right|
\leq0.03.
\label{eq:matched_coverage}
\end{equation}

Accordingly, a narrower procedure is credited as more efficient only under
comparable selective coverage; coverage, refusal, and sharpness remain
separate performance dimensions.

\subsection{Treatment-Density Sensitivity and External Comparator Evaluation}
\label{subsec:sensitivity_comparators}

Two secondary analyses examine treatment-law specification and external
conformal comparators without altering the registered
27-configuration MineDoseBench evaluation.

Treatment-density sensitivity is assessed within the RF-primary M6 pipeline
using

\begin{equation}
G_{\mathrm{ORACLE}},\qquad
G_{\mathrm{ESTIMATED}},\qquad
G_{\mathrm{MISSPECIFIED}}.
\label{eq:g_sensitivity_tracks}
\end{equation}

$G_{\mathrm{ORACLE}}$ uses the known mixed continuous--atomic treatment law
from the MineDoseBench data-generating process.
$G_{\mathrm{ESTIMATED}}$ uses the primary fitted propensity law
$\widehat g(A\mid U)$, estimated exclusively from nuisance-training data:
endpoint/interior mass is modeled by multinomial logistic regression, while
the interior conditional law uses a ridge-logit mean with a global beta
concentration parameter. $G_{\mathrm{MISSPECIFIED}}$ deliberately removes
dependence on $U$ through a marginal mixed-treatment model.

The analysis spans seven mechanism-based configurations: baseline,
treatment shift, strong spatial dependence ($\rho=0.6$), strong dependence
with target-design shift, poor overlap, severe-tail support stress, and
mixed endpoint atoms. All 20 replications and common target identities are
retained. The RF outcome nuisance, spatial nuisance, graph, calibration
data, intervention specification, support thresholds, $\alpha=0.10$, and
scalable inference route are held fixed, so the comparison isolates
treatment-law specification rather than changes elsewhere in the pipeline.

GeoDose-CP is also compared on a selected seven-configuration
MineDoseBench subset, wherever native applicability permits, with three
adapted conformal families. E1 represents continuous-treatment conformal
inference
\cite{SchroederEtAl2025ContinuousTreatmentCP,
VerhaegheEtAl2026DoseResponseCP}; E2 represents weighted
distribution-shift conformal inference combining treatment/intervention
and target-population reweighting without graph-dependent residual
reassignment
\cite{QinEtAl2025CovariateShiftCausal,
QiuEtAl2023CovariateShift,YangEtAl2024DoublyRobust}; and E3 represents
spatial/local conformal calibration without treatment-shift correction
\cite{MaoMartinReich2024SpatialPrediction,
Guan2023LocalizedConformal,HoreBarber2025LocalWeights}. The RF-primary M6
implementation serves as the GeoDose-CP reference.

Each comparator is evaluated under its native support conditions rather
than being forced to reproduce GeoDose-CP operating rules. In the
implemented comparison, E1 is natively nonapplicable when the queried
target dose is an exact treatment atom under its continuous-treatment
formulation; such cases are treated as nonapplicability rather than
refusal. E2 and M6 operate on the mixed treatment measure, whereas E3 is
treatment-agnostic. None of E1--E3 implements the candidate-specific joint
treatment--spatial target law of Theorem~\ref{thm:graph_local}. They are
therefore mechanism-aligned adaptations of published method families rather
than byte-for-byte reproductions of authors' software. Their exact scores,
weighting rules, localization, treatment-density specifications, support
conditions, calibration procedures, hyperparameters, and set construction
are reported in the Supplement, together with a strict all-method
common-subset comparison.

\subsection{Real NSW Earth-Observation Demonstration}
\label{subsec:real_nsw}

The NSW study evaluates real-world EO applicability rather than causal
performance of the full treatment-dependent GeoDose-CP pipeline. Three
mines---Mt Arthur Coal, Hunter Valley Operations, and Bulga Complex---were
selected at the primary 90-m support using outcome-blind EO-quality
criteria, with an independently anchored 180-m analysis for
change-of-support sensitivity. Random forest (RF) is the primary prediction
track and XGBoost the confirmation track. Across both spatial supports and
predictor tracks, the study comprises 41,244 method--query records.

The response is annual change in the observed Digital Earth Australia
photosynthetic-vegetation product,

\begin{equation}
Y_i^{\mathrm{EO}}
=
\frac{
PV_{i,2025}-PV_{i,2024}
}{100}.
\label{eq:real_outcome}
\end{equation}

This quantity represents observed satellite-product green fractional-cover
change rather than an error-free latent measure of ecological recovery
\cite{SinghEtAl2024ConformalEO,
MartinezFerrerEtAl2022BiophysicalUQ,
GarciaSoriaEtAl2024VegetationUQ}. Predictors comprise 27 pretreatment EO
and environmental variables plus three mine indicators. All 2025
predictors, outcome-derived quantities, mapped rehabilitation fraction,
oracle or hidden variables, sample-role indicators, and spatial-axis
coordinates are excluded.

A treatment-authenticity audit established that the public NSW archive
contains rehabilitation snapshots but not an auditable longitudinal
continuous rehabilitation treatment consistent with the target estimand.
No proxy exposure was introduced. Consequently, treatment-dependent
procedures M2, M4, and M6 are nonoperational. The real-data analysis
instead uses M1, M3, and a treatment-free graph-safe fallback (GS), which
calibrates on a deterministic graph-safe subset without treatment-dependent
transport. GS is distinct from the treatment-weighted MineDoseBench M5
baseline defined in Section~\ref{subsec:metrics}. The NSW study therefore
tests treatment-authenticity gating, spatial deployment, EO-product quality,
and method applicability rather than causal validation of full M6.

Geographically buffered partitioning is the primary validation design.
Random splitting is used only for the RF/M1 predictive spatial-leakage
diagnostic, leave-one-mine-out analysis as a cross-mine transfer diagnostic,
and the 90-to-180-m comparison as a change-of-support sensitivity
\cite{RobertsEtAl2017SpatialCV,
ValaviEtAl2019BlockCV,
MeyerEtAl2018TargetValidation,
PlotonEtAl2020SpatialValidation,
LudwigEtAl2023Transferability}. The 180-m substrate is reconstructed
independently under the anchored all-available-child eligibility rule
rather than by rescaling 90-m predictions, and a UE$\leq20$ analysis
provides an additional EO-product-quality sensitivity.

For M3, conformity p-values are evaluated over the complete eligible
real-data frame. Full finite-domain inversion is restricted to an
outcome-blind subset of 60 targets for 90-m RF, 30 for 90-m XGBoost,
36 for 180-m RF, and 18 for 180-m XGBoost, totaling 144 inversions.
Reported M3 widths therefore characterize this targeted inversion subset,
not full-test-frame or population-level efficiency. For the 90-m RF width comparison in Figure~\ref{fig:nsw_observed}, M1 and GS are
evaluated on this same frozen 60-target subset.

The fitted NSW spatial nuisance is empirical and not finite-sample
theorem-certified; associated N3 quantities are therefore reported only as
diagnostics, not as certified real-data coverage guarantees.

\section{Results}
\label{sec:results}

\subsection{Controlled Validation and Overall MineDoseBench Performance}
\label{subsec:results_overall}

The controlled known-truth experiments first evaluated the inferential
mechanism independently of realistic mine geometry and environmental
heterogeneity. Across 4,190 case replications and 52,740 scientific
queries, no computational failures occurred. None of the 17 prespecified
lower-tail coverage groups showed multiplicity-adjusted evidence of
undercoverage.

MineDoseBench then evaluated M2--M6 under real mine geometry and
pretreatment environmental context while retaining known counterfactual
truth. Table~\ref{tab:mdb_overall} summarizes the RF-primary results across
all 27 configurations. Full GeoDose-CP (M6) attained mean selective
coverage of 0.9692; all 27 configuration-level point estimates were at or
above the nominal 0.90 level, with a minimum of 0.9140. The 95\% replication-by-mine cluster-bootstrap lower bounds were at least
0.90 in 26/27 configurations. The sole exception was S2, for which selective coverage was
0.914 and the lower bound was 0.890.

M6 also maintained strong lower-tail local performance. Its mean local
$q_{0.05}$ was 0.9488 and its minimum was 0.8951, the highest minimum
among M2--M6. The corresponding minima were 0.8804 for M4, 0.7694 for
M5, 0.7464 for M3, and 0.7079 for M2. M3 and M4 were more conservative
on average, with mean local $q_{0.05}$ values of 0.9725 and 0.9574,
respectively, showing that strong average coverage did not guarantee
equally strong lower-tail local behavior.

M4 also achieved high empirical selective coverage, with all 27
configuration-level estimates above 0.90. This result does not establish
the factorized M4 construction as the general target reference law because,
as shown in Section~\ref{subsec:factorization}, product factorization
requires additional structural conditions. The comparison therefore
distinguishes empirical benchmark performance from the joint-law
justification supplied by G2, rather than requiring M6 to numerically
dominate every comparator.


\begin{table*}[!t]

\caption{Overall MineDoseBench Performance Across 27 Configurations}

\label{tab:mdb_overall}

\centering

\footnotesize

\setlength{\tabcolsep}{5.5pt}

\renewcommand{\arraystretch}{1.12}

\begin{tabular}{@{}lcccccc@{}}

\toprule

\textbf{Method} &
\shortstack{\textbf{Mean}\\\textbf{Sel. Cov.}} &
\shortstack{\textbf{Min.}\\\textbf{Sel. Cov.}} &
\shortstack{\textbf{Cases}\\$\boldsymbol{\geq 0.90}$} &
\shortstack{\textbf{Mean Local}\\$\boldsymbol{q_{0.05}}$} &
\shortstack{\textbf{Min. Local}\\$\boldsymbol{q_{0.05}}$} &
\shortstack{\textbf{Return}\\\textbf{Rate}} \\

\midrule

M2 & 0.8855 & 0.8022 & 8/27  & 0.8265 & 0.7079 & 0.9500 \\
M3 & 0.9856 & 0.8360 & 26/27 & 0.9725 & 0.7464 & 1.0000 \\
M4 & 0.9805 & 0.9422 & 27/27 & 0.9574 & 0.8804 & 0.9500 \\
M5 & 0.8989 & 0.8360 & 12/27 & 0.8385 & 0.7694 & 0.8945 \\

\textbf{M6} &
\textbf{0.9692} &
\textbf{0.9140} &
\textbf{27/27} &
\textbf{0.9488} &
\textbf{0.8951} &
\textbf{0.9500} \\

\bottomrule

\end{tabular}

\vspace{1mm}

\parbox{0.98\textwidth}{\footnotesize
\emph{Notes:}
Results use the RF-primary track on the observed benchmark-response scale.
Treatment-dependent primary methods use the frozen estimated
mixed-propensity model $\widehat g(A\mid U)$; the oracle treatment law is
not the primary MineDoseBench track. Sel.\ Cov.\ denotes selective coverage
conditional on a returned prediction set. Local $q_{0.05}$ is the fifth
percentile of target-local selective coverage. Means are calculated across
the 27 configurations. M4 is the naive-product comparator and is not
generally theorem-backed. Width is not reported because efficiency is
interpreted only under matched selective coverage.
}
\end{table*}

Selective coverage is interpreted jointly with return rate because refused
queries are reported separately rather than counted as noncoverage.

Figure~\ref{fig:configurations} resolves the aggregate results by
configuration. Panel (a) shows M6 selective coverage across all 27
configurations, panel (b) compares local $q_{0.05}$ across M2--M6, and
panel (c) reports the corresponding refusal rates. M6 retains the strongest
minimum local $q_{0.05}$, while refusal remains limited across most
supported configurations and increases as information or support
deteriorates.


\begin{figure*}[!t]

\centering

\includegraphics[width=0.98\textwidth]{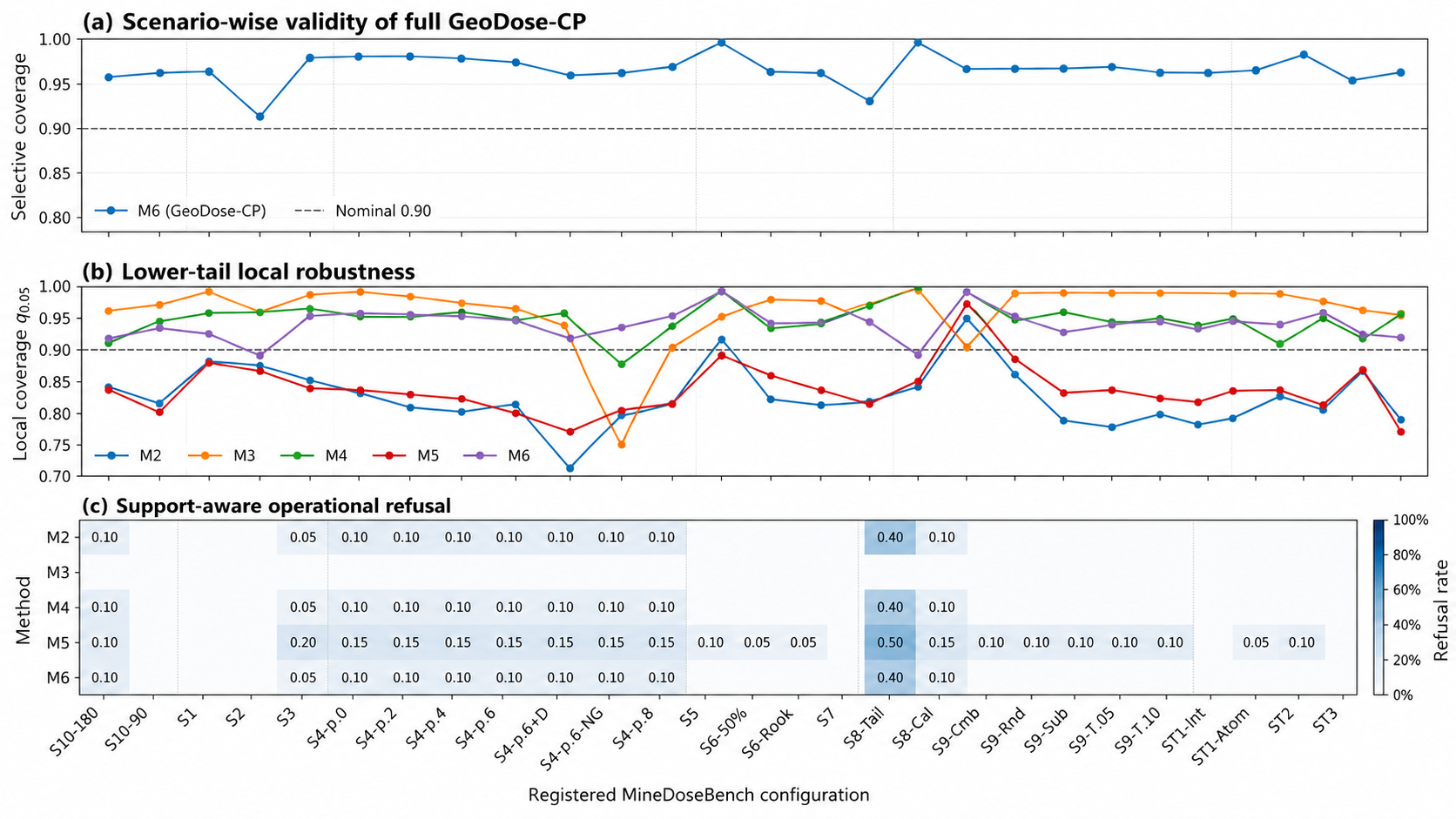}

\caption{MineDoseBench selective coverage, local robustness, and
support-aware refusal across 27 configurations.
(a) Configuration-wise selective coverage of full GeoDose-CP (M6).
(b) Fifth-percentile target-local selective coverage for M2--M6.
(c) Operational refusal rates for M2--M6. Dashed horizontal lines denote
nominal 0.90 coverage. Results correspond to the RF-primary evaluation on
the observed benchmark-response scale.}

\label{fig:configurations}

\end{figure*}

\subsection{Spatial Dependence and Coverage-Matched Efficiency}
\label{subsec:results_spatial}

Across the S4 dependence sequence, M6 selective coverage remained above
the nominal 0.90 level throughout, taking values 0.9844, 0.9844, 0.9822,
0.9778, and 0.9733 at $\rho=0$, 0.2, 0.4, 0.6, and 0.8, respectively
(Fig.~\ref{fig:s4_hero}). Thus, empirical coverage declined only modestly
as spatial dependence strengthened.

Efficiency showed a different pattern. Under the coverage-matching rule in
\eqref{eq:matched_coverage}, M6 was 3.8\%, 4.9\%, and 11.4\% narrower than
M5 at $\rho=0$, 0.2, and 0.4, respectively, but 31.2\% wider at
$\rho=0.6$. At $\rho=0.8$, the procedures did not satisfy the 0.03
coverage-matching tolerance, so no efficiency comparison is made.

These width comparisons use the same 50 prespecified S4 target identities,
with 10 identities at each
$\rho\in\{0,0.2,0.4,0.6,0.8\}$. Full finite-domain inversion across M3,
M4, and M6 yields 150 inversion records; M2 and M5 contribute their
closed-form returned intervals on the same registered identities,
intersected with the benchmark response domain
$\mathcal{Y}_{\mathrm{MDB}}=[-1,1]$, for the efficiency-only comparison.
The resulting widths therefore provide targeted structural-efficiency
evidence rather than population-level sharpness estimates. Overall, M6
maintained stable selective coverage as spatial dependence increased,
while its efficiency relative to M5 remained regime-dependent.

\begin{figure*}[!t]
\centering
\includegraphics[width=\textwidth]{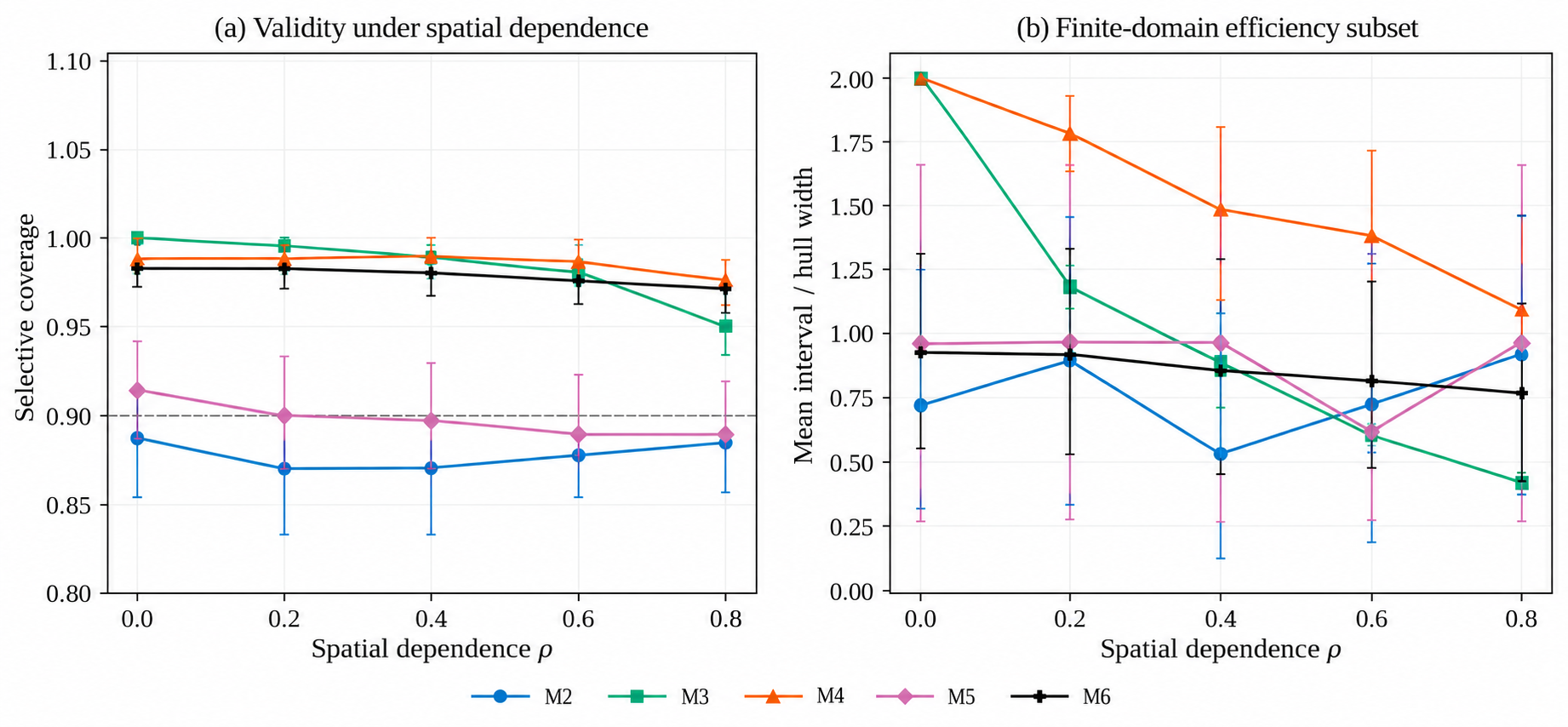}

\caption{Effect of increasing spatial dependence in S4.
(a) RF-primary selective coverage of M2--M6 for
$\rho\in\{0,0.2,0.4,0.6,0.8\}$; error bars denote 95\%
replication-by-mine cluster-bootstrap confidence intervals.
(b) Mean finite-domain interval/hull width on
$\mathcal{Y}_{\mathrm{MDB}}=[-1,1]$ for the same prespecified S4
target identities. For M3, M4, and M6, widths are obtained from full
finite-domain candidate inversion; for M2 and M5, closed-form returned
intervals are intersected with $\mathcal{Y}_{\mathrm{MDB}}$ for this
efficiency-only representation. Error bars denote 95\%
replication-by-mine cluster-bootstrap confidence intervals for mean width.
Width is interpreted as efficiency only when the selective-coverage
difference satisfies the prespecified 0.03 coverage-matching criterion;
the $\rho=0.8$ M6--M5 comparison is therefore not interpreted as an
efficiency comparison.}

\label{fig:s4_hero}
\end{figure*}
\subsection{Approximation, Stress, and Support Boundaries}
\label{subsec:results_stress}

Exact and sparse M6 showed close aggregate and decision-level agreement,
although pointwise equality was not claimed. Across 2,700 paired target
identities, selective coverage was 0.9885 for exact inference and 0.9874
for the sparse $m=64$ route. Mean absolute p-value discrepancy was 0.00557
and the median was numerically zero, with

\begin{equation}
q_{0.90}=0.01607,\qquad
q_{0.95}=0.03467,\qquad
q_{0.99}=0.09739,
\label{eq:exact_sparse_quantiles}
\end{equation}

and maximum $|\Delta p|=0.42945$. Discrepancies exceeded 0.01, 0.05,
and 0.10 for 352 (13.0\%), 76 (2.81\%), and 23 (0.85\%) targets,
respectively, yet only 9/2,700 (0.33\%) changed inclusion status at
$\alpha=0.10$. Thus, sparse inference remained close to the exact route at
the aggregate and decision levels despite a small number of materially
larger pointwise discrepancies. The largest occurred in S2
($p_{\mathrm{exact}}=0.8702$, $p_{\mathrm{sparse}}=0.4408$) without
changing the inclusion decision.

Within S4, approximation sensitivity increased with spatial dependence: mean $|\Delta p|$ was
approximately zero at $\rho=0$ and 0.00159, 0.00318, 0.00451, and
0.00690 at $\rho=0.2$, 0.4, 0.6, and 0.8, respectively; two decision
disagreements occurred at $\rho=0.8$. Unambiguous target-level linkage
between discrepancy and ESS/maximum-weight diagnostics was available for
only 51/2,700 identities, so no general diagnostic association is inferred.
The $m=64$ route therefore provides close aggregate approximation while
retaining nonzero query-level approximation error.

\begin{figure}[!t]
\centering
\includegraphics[width=\columnwidth]{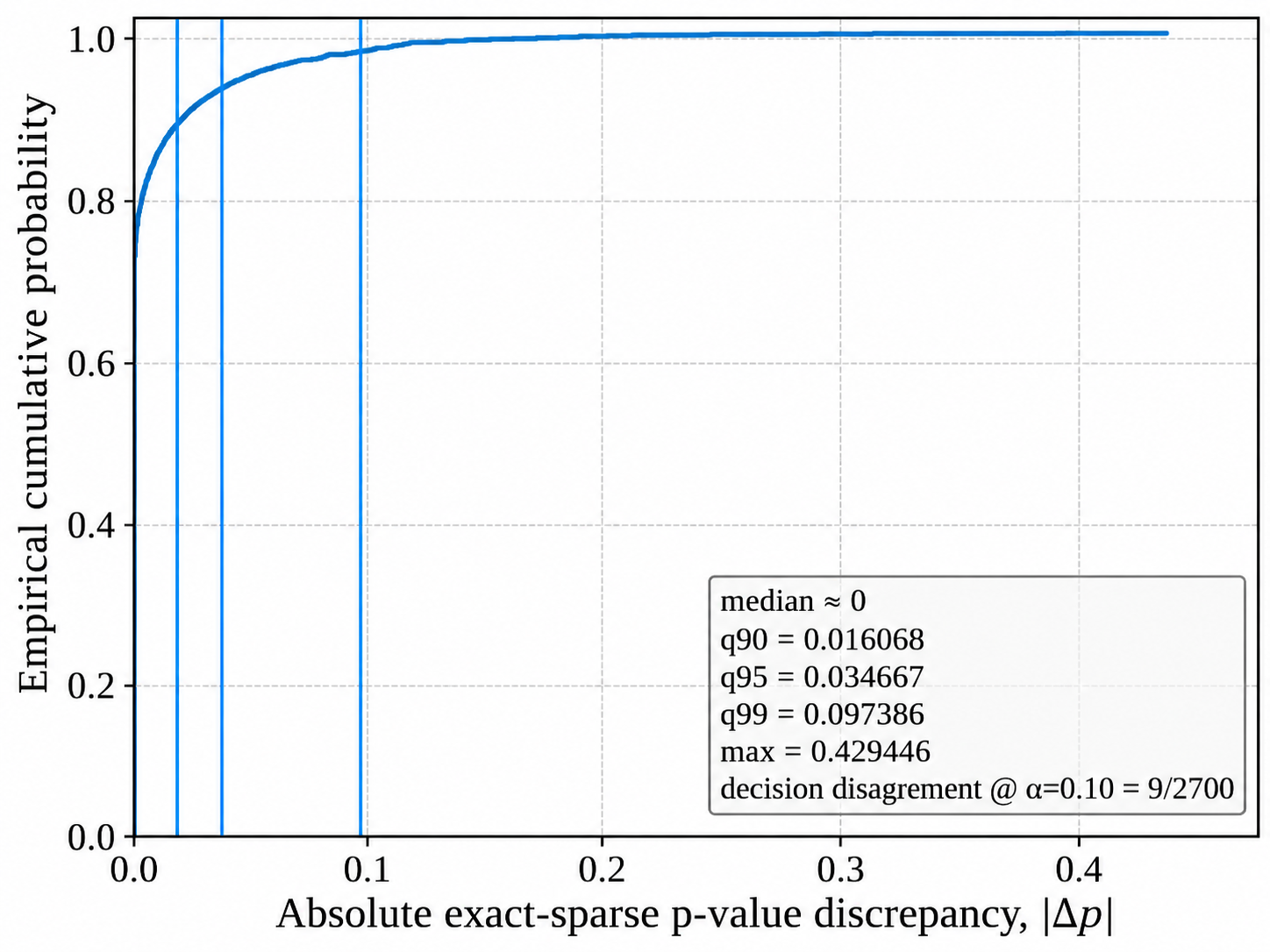}
\caption{Exact--sparse M6 discrepancy across 2,700 paired registered
target identities. The empirical cumulative distribution function shows
the absolute difference $|\Delta p|$ between full-graph exact inference
and the sparse $m=64$ route. Vertical reference lines mark the
0.90, 0.95, and 0.99 quantiles
($0.01607$, $0.03467$, and $0.09739$, respectively).
The median discrepancy is numerically zero and the maximum is $0.42945$.
Only 9 of 2,700 targets (0.33\%) changed inclusion status at
$\alpha=0.10$, so close aggregate and decision-level agreement does not
imply pointwise numerical equivalence.}
\label{fig:exact_sparse_ecdf}
\end{figure}

A complementary structural audit examined the product construction in
\eqref{eq:naive_product}. Under the reduction condition, the factorized
and joint graph-local constructions agreed to numerical precision; under
the nonfactorizing treatment--spatial regime, their candidate-specific
relationship became nonconstant. The audit therefore recovers
factorization in its valid special case while confirming that it does not,
in general, reproduce the G2 target-orbit law.

Table~\ref{tab:mdb_stress_audits} summarizes the principal robustness and
support-boundary results. M6 remained above nominal selective coverage
under target-design shift (0.9644), non-Gaussian dependence (0.9667), five
measurement-error perturbations (0.966--0.972), and the 90-m/180-m
change-of-support analysis (0.9640/0.9578), with return rates of 100\% and
90\% at the two spatial supports, respectively.

Support limitations produced distinct operating behavior. Genuine endpoint
atoms at $A=0$ and $A=1$ yielded selective coverage of 0.980 and 0.990
with 100\% return, whereas the corresponding interior-only law refused
every endpoint query rather than borrowing support from nearby continuous
mass. Under severe-tail stress, M6 returned 60\% of queries with selective
coverage 1.000, while false support remained 0.05. Leave-one-mine-out
transfer produced 100\% refusal across all 375 M6 queries because the
held-out mine lacked a connected local calibration orbit. Hidden spatial
confounding remained a negative control: predictive selective coverage was
0.966, but this does not restore the conditional ignorability required for
causal interpretation.


\begin{table*}[!t]

\caption{Stress, Support, and Applicability Audits for GeoDose-CP}
\label{tab:mdb_stress_audits}

\centering
\footnotesize
\renewcommand{\arraystretch}{1.10}
\setlength{\tabcolsep}{4pt}

\begin{tabularx}{\textwidth}{
@{}
>{\raggedright\arraybackslash}p{1.42in}
>{\raggedright\arraybackslash}p{1.42in}
>{\raggedright\arraybackslash}p{1.52in}
>{\raggedright\arraybackslash}X
@{}}

\toprule

\textbf{Audit} &
\textbf{Selective coverage} &
\textbf{Return/support} &
\textbf{Principal finding} \\

\midrule

Target-design shift &
0.9644 &
90\% return &
Coverage remains above the nominal 0.90 level under nonidentity
target-design shift. \\

Non-Gaussian spatial law &
0.9667 &
90\% return &
Coverage remains above nominal under the non-Gaussian residual-dependence
setting. \\

Measurement-error perturbations &
0.966--0.972 &
100\% return &
Coverage remains above nominal across all five measurement-error
perturbations. \\

Change of spatial support &
90 m: 0.9640;\newline
180 m: 0.9578 &
90 m: 100\% return;\newline
180 m: 90\% return &
Coverage remains above nominal at both independently anchored spatial
supports, with lower return at 180 m. \\

Supported endpoint atoms &
$A=0$: 0.980;\newline
$A=1$: 0.990 &
100\% return at both atoms &
Both genuine endpoint atoms remain supported under the mixed treatment
measure. \\

Unsupported endpoints &
--- &
100\% refusal at $A=0,1$ &
The interior-only treatment law refuses unsupported endpoint queries
rather than assigning artificial endpoint mass. \\

Severe-tail support stress &
1.000 &
60\% return;\newline
false support $=0.05$ &
Selective coverage is retained among returned queries, but support loss
reduces return and residual false support remains. \\

Hidden spatial confounding &
0.966 &
100\% return;\newline
false support $=0.05$ &
Negative control: predictive coverage remains high despite violation of
the causal-identification condition. \\

Leave-one-mine-out &
--- &
0\% return;\newline
375/375 refused &
All held-out-mine M6 queries refuse because a connected local calibration
orbit is unavailable; successful cross-mine local-orbit transfer is not
demonstrated. \\

\bottomrule

\end{tabularx}

\vspace{1mm}

\parbox{0.98\textwidth}{\footnotesize
\emph{Notes:}
Coverage denotes selective coverage among returned prediction sets;
refused queries are reported separately and are not counted as
noncoverage. ``---'' indicates that selective coverage is undefined
because no prediction sets were returned. False support denotes the
prespecified rate at which a benchmark query with a known support failure
nevertheless passes the applicable operational support gate. The
hidden-confounding experiment is a negative control and does not provide
evidence of causal identification.
}

\end{table*}

\subsection{Treatment-Density Sensitivity and External Comparators}
\label{subsec:results_sensitivity_external}

\subsubsection{Treatment-Density Sensitivity}

Holding all non-treatment components of the RF-primary M6 pipeline fixed,
the estimated treatment law closely tracked the oracle over the selected
seven-configuration subset (Table~\ref{tab:g_sensitivity}). Mean selective
coverage was 0.9830 for $G_{\mathrm{ORACLE}}$ and 0.9825 for
$G_{\mathrm{ESTIMATED}}$, with mean local $q_{0.05}$ of 0.9661 and
0.9655, respectively. Their minimum selective coverages were 0.9680 and
0.9644, and minimum local $q_{0.05}$ values were 0.9347 and 0.9267.

Misspecification was more visible locally than globally.
$G_{\mathrm{MISSPECIFIED}}$ retained mean selective coverage of 0.9786,
but its mean and minimum local $q_{0.05}$ decreased to 0.9545 and 0.8882.
Under strong dependence combined with target-design shift, local
$q_{0.05}$ was 0.9347, 0.9267, and 0.8882 for the oracle, estimated,
and misspecified laws, respectively.

\begin{table*}[!t]
\caption{Treatment-Density Sensitivity for M6}
\label{tab:g_sensitivity}
\centering
\footnotesize
\renewcommand{\arraystretch}{1.10}
\setlength{\tabcolsep}{7pt}

\begin{tabular}{lccccc}
\toprule
\textbf{Treatment law} &
\textbf{Mean sel. cov.} &
\textbf{Min. sel. cov.} &
\textbf{Mean local $q_{0.05}$} &
\textbf{Min. local $q_{0.05}$} &
\textbf{Return rate} \\
\midrule
$G_{\mathrm{ORACLE}}$
    & 0.9830 & 0.9680 & 0.9661 & 0.9347 & 0.9571 \\
$G_{\mathrm{ESTIMATED}}$
    & 0.9825 & 0.9644 & 0.9655 & 0.9267 & 0.9071 \\
$G_{\mathrm{MISSPECIFIED}}$
    & 0.9786 & 0.9537 & 0.9545 & 0.8882 & 0.9857 \\
\bottomrule
\end{tabular}

\vspace{1mm}

\parbox{0.96\textwidth}{\footnotesize
\emph{Notes:}
Results use the same seven configurations and 20 replications for each
treatment-law specification, with all non-treatment components held fixed.
$G_{\mathrm{ESTIMATED}}$ is the treatment model used by the registered
RF-primary M6 pipeline. Coverage is selective among returned prediction
sets.
}
\end{table*}

Support diagnostics did not identify misspecification by themselves.
Mean support ESS was 146.46, 89.17, and 415.47 for
$G_{\mathrm{ORACLE}}$, $G_{\mathrm{ESTIMATED}}$, and
$G_{\mathrm{MISSPECIFIED}}$, respectively, while mean maximum normalized
weight was 0.1513, 0.2203, and 0.0714. The misspecified law therefore
appeared most favorable by weight dispersion despite producing the weakest
lower-tail local coverage. Large ESS and diffuse normalized weights indicate
favorable support concentration, but they do not establish correctness of
the treatment model $g(A\mid U)$.

\subsubsection{External Comparator Evaluation}

Table~\ref{tab:external_baselines} summarizes M6 and three adapted
conformal families on the selected MineDoseBench external-comparator
subset: continuous-treatment inference (E1), weighted distribution-shift
inference (E2), and spatial/local calibration (E3). Because E1 is natively
inapplicable to an exact atomic target query, its aggregate statistics use
six applicable configurations, whereas E2, E3, and M6 use all seven.
A strict six-configuration all-method comparison is reported in the
Supplement. On this strict common subset, mean selective coverage remained ordered
M6 (0.9796) $>$ E3 (0.9510) $>$ E2 (0.9153) $>$ E1 (0.8970),
and the same ordering held for mean local $q_{0.05}$.

On this descriptive selected-subset summary, mean selective coverage was
0.8970, 0.9191, 0.9463, and 0.9825 for E1, E2, E3, and M6,
respectively; the corresponding minimum values were 0.8120, 0.8580,
0.9180, and 0.9644. Mean local $q_{0.05}$ was 0.8310, 0.8701,
0.9134, and 0.9655, with minima of 0.7193, 0.7940, 0.8817, and
0.9267, respectively. E3 was the strongest external comparator, while M6
showed the strongest selective-coverage and lower-tail local profile on the
evaluated supported regimes.

\begin{table*}[!t]
\caption{External-Comparator Performance on the Selected MineDoseBench Evaluation Subset}
\label{tab:external_baselines}
\centering
\footnotesize
\renewcommand{\arraystretch}{1.10}
\setlength{\tabcolsep}{4pt}

\begin{tabular}{lcccccccc}
\toprule
\textbf{Method} &
\textbf{Mean} &
\textbf{Min.} &
\textbf{Cases} &
\textbf{Mean local} &
\textbf{Min. local} &
\textbf{Native} &
\textbf{Return} &
\textbf{Infinite-set} \\
&
\textbf{sel. cov.} &
\textbf{sel. cov.} &
$\boldsymbol{\geq0.90}$ &
$\boldsymbol{q_{0.05}}$ &
$\boldsymbol{q_{0.05}}$ &
\textbf{applicability} &
$\boldsymbol{\mid}$ \textbf{applicable} &
$\boldsymbol{\mid}$ \textbf{returned} \\
\midrule
E1 continuous-treatment
    & 0.8970 & 0.8120 & 3/6
    & 0.8310 & 0.7193
    & 0.8571 & 1.0000 & 0.3097 \\

E2 weighted shift
    & 0.9191 & 0.8580 & 5/7
    & 0.8701 & 0.7940
    & 1.0000 & 1.0000 & 0.2714 \\

E3 spatial/local
    & 0.9463 & 0.9180 & 7/7
    & 0.9134 & 0.8817
    & 1.0000 & 1.0000 & 0.0000 \\

M6 GeoDose-CP
    & \textbf{0.9825} & \textbf{0.9644} & \textbf{7/7}
    & \textbf{0.9655} & \textbf{0.9267}
    & 1.0000 & 0.9071 & 0.0000 \\
\bottomrule
\end{tabular}

\vspace{1mm}

\parbox{0.97\textwidth}{\footnotesize
\emph{Notes:}
E1--E3 are mechanism-aligned adaptations of published method families,
not exact software reproductions. Native applicability is the proportion
of registered queries for which a method is defined under its own support
conditions. Return rate is calculated conditional on native applicability.
E1 is natively inapplicable to the exact atomic-target configuration under
its implemented continuous-treatment formulation; this is nonapplicability,
not refusal. Consequently, E1 contributes six configurations to its
coverage summaries and has native applicability 0.8571 but return rate
1.0000 among applicable queries. Infinite-set rate is calculated among
returned prediction sets and denotes vacuous or unbounded returned sets.
M6 uses $G_{\mathrm{ESTIMATED}}$. A strict six-configuration all-method
comparison is provided in the Supplement.
}
\end{table*}
The joint treatment--spatial stresses produced the clearest separation.
At $\rho=0.6$, selective coverage was 0.872, 0.858, 0.958, and 0.9778
for E1, E2, E3, and M6. With target-design shift added, coverage was
0.812, 0.934, 0.950, and 0.9644, while local $q_{0.05}$ was 0.7193,
0.8920, 0.9253, and 0.9267, respectively. These regimes are especially
informative because treatment shift and spatial dependence act jointly,
the setting for which G2 constructs a candidate-specific joint reference
law rather than combining separately derived treatment and spatial
weights.

Operational behavior also differed (Fig.~\ref{fig:external_baselines}).
Among queries returned by both M6 and E3, M6 achieved higher selective
coverage in all seven configurations, with casewise advantages of
approximately 1.0--8.2 percentage points. Under severe-tail stress, E1
and E2 produced infinite sets for more than 93\% of applicable queries,
whereas M6 returned finite sets for 60\% and refused the remainder, with
selective coverage 1.000 among returned sets. Thus, refusal and vacuous
return represent distinct responses to deteriorating support.

\begin{figure*}[!t]
\centering
\includegraphics[width=0.96\textwidth]
{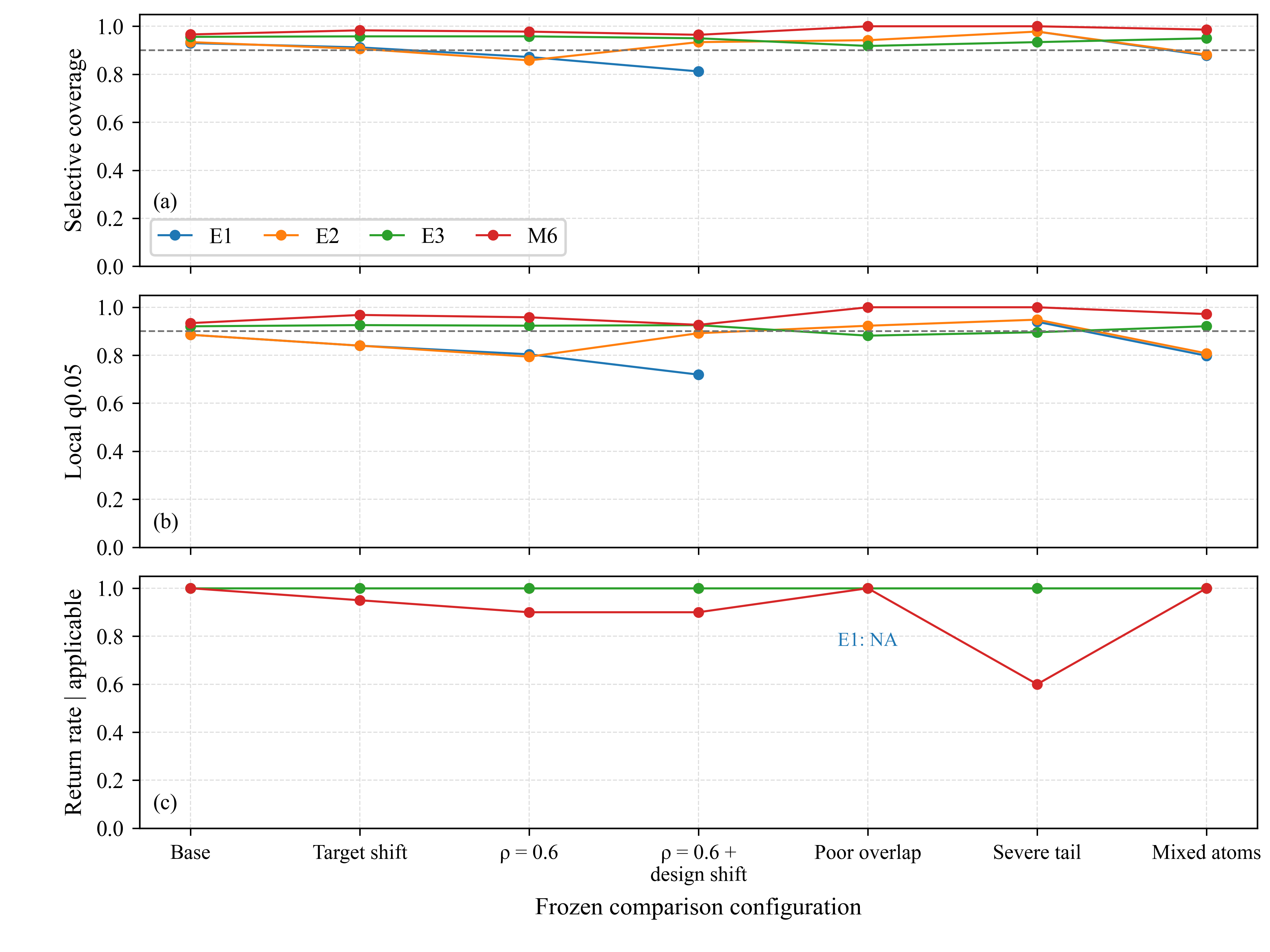}
\caption{External-comparator performance on the selected MineDoseBench
evaluation subset. The three panels show configuration-level selective
coverage, lower-tail local $q_{0.05}$, and return rate conditional on
native applicability, respectively. E1, E2, and E3 denote the adapted
continuous-treatment, weighted distribution-shift, and spatial/local
conformal families; M6 denotes GeoDose-CP using the primary estimated
treatment density. Dashed horizontal lines in the upper two panels denote
nominal 0.90 coverage. E1 is natively inapplicable in the poor-overlap
atomic-target configuration, so no conditional return-rate value is
defined there; this is nonapplicability rather than refusal.}
\label{fig:external_baselines}
\end{figure*}

Prediction-set width is interpreted only when
\eqref{eq:matched_coverage} is satisfied. On the restricted
coverage-matched inversion subset, E3 can produce substantially sharper
sets than M6. The external comparison therefore supports a stronger
selective-coverage and lower-tail local robustness profile for M6 in the
evaluated supported regimes, while providing no basis for a claim of
universal efficiency superiority.
\subsection{Real NSW Observed-Product Results}
\label{subsec:results_nsw}

The NSW archive did not provide the longitudinal continuous rehabilitation
treatment required by the target estimand. Consequently, treatment-dependent
procedures M2, M4, and M6 were nonoperational and no proxy treatment was
introduced. The real-data results therefore concern M1, M3, and the
treatment-free graph-safe fallback (GS) as predictive or spatial-calibration
procedures for the observed EO product. GS is distinct from the
treatment-weighted MineDoseBench M5 baseline.

Figure~\ref{fig:nsw_spatial} shows the geographically buffered 90-m design
for Mt Arthur Coal, Hunter Valley Operations (HVO), and Bulga Complex,
together with M3 conformity p-values for held-out RF-primary targets.
These p-values characterize conformity with the fitted spatial reference
and are interpreted as empirical diagnostics rather than theorem-certified
coverage guarantees or causal evidence.


\begin{figure*}[!t]
\centering
\includegraphics[width=\textwidth]{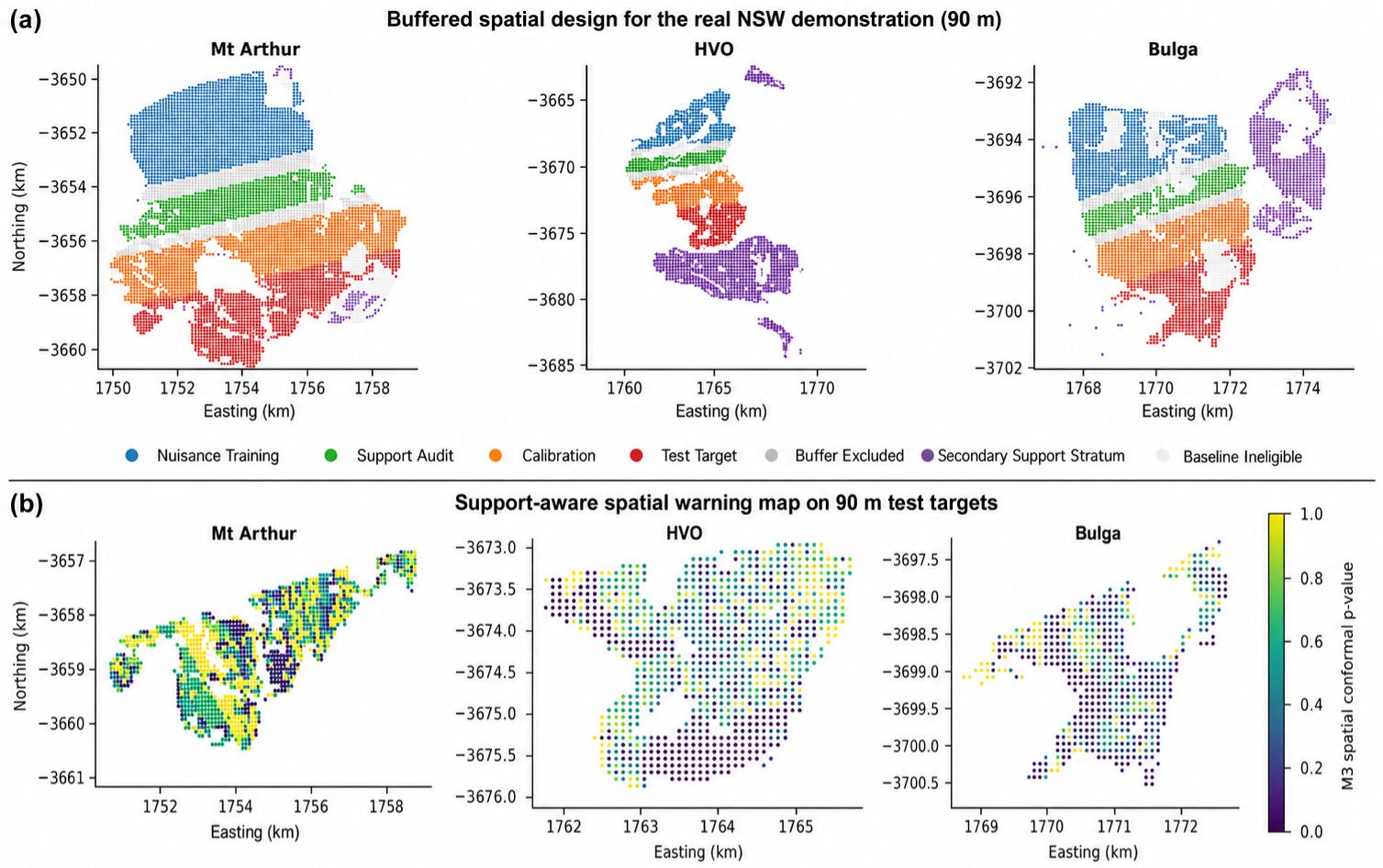}
\caption{Spatial design and empirical uncertainty diagnostics for the NSW
demonstration at 90-m support.
(a) Buffered partition for Mt Arthur Coal, HVO, and Bulga Complex,
including nuisance-training, support, calibration, held-out test, buffer,
secondary, and ineligible regions.
(b) M3 conformity p-values for held-out RF-primary targets; lower values
indicate weaker conformity with the fitted spatial reference. The p-values
are empirical diagnostics, not causal evidence or theorem-certified
coverage guarantees.}
\label{fig:nsw_spatial}
\end{figure*}

Observed-product coverage varied substantially with spatial support and
predictor. Under RF, pooled M1/M3/GS coverage was
0.8381/0.8154/0.8447 at 90 m and
0.9401/0.8231/0.9201 at 180 m. Under XGBoost, the corresponding values
were 0.9028/0.9050/0.8644 and 0.9472/0.8545/0.9344. These changes
indicate support- and model-sensitivity rather than superiority of the
coarser resolution.

Geographic heterogeneity was also pronounced. In the 90-m RF analysis,
M1/M3/GS coverage was 0.9595/0.9285/0.9619 at Mt Arthur,
0.8028/0.7306/0.7780 at HVO, and 0.6193/0.6995/0.6885 at Bulga.
Model sensitivity was site dependent: under 90-m XGBoost, Bulga
M1/M3 coverage increased to 0.9927/0.9873, whereas HVO remained at
0.7963/0.7909. The real-data study therefore does not support a universal
0.90 observed-product coverage claim.

The fitted spatial nuisance also indicated a demanding dependence regime.
Ten of 12 mine$\times$scale$\times$predictor fits selected
$\rho=0.8$, the upper limit of the specified grid; the exceptions were
Bulga at 180 m, with $\rho=0.4$ for RF and $\rho=0.6$ for XGBoost.
These fitted quantities remain empirical rather than finite-sample
theorem-certified and indicate strong residual dependence and/or spatial
model mismatch.

The stricter UE$\leq20$ sensitivity retained 94.6--98.3\% of eligible
90-m cells and 88.7--96.5\% at 180 m across the three mines, with only
small qualitative changes to the main coverage patterns. The principal
site, support, and model sensitivities were therefore not attributable
solely to the primary UE eligibility threshold.

Figure~\ref{fig:nsw_observed} summarizes observed-product coverage and width for the applicable
procedures under the primary 90-m RF analysis. Coverage in panel (a) is
summarized over the full held-out test frame. For panel (b), M1, M3, and
GS widths are evaluated on the same frozen 60-target outcome-blind subset
used for finite-domain M3 inversion, yielding mean widths of 0.3097,
0.3575, and 0.3247, respectively. These widths therefore characterize a
common targeted subset rather than full-test-frame or population-level
efficiency.


\begin{figure*}[!t]
\centering
\includegraphics[width=\textwidth]{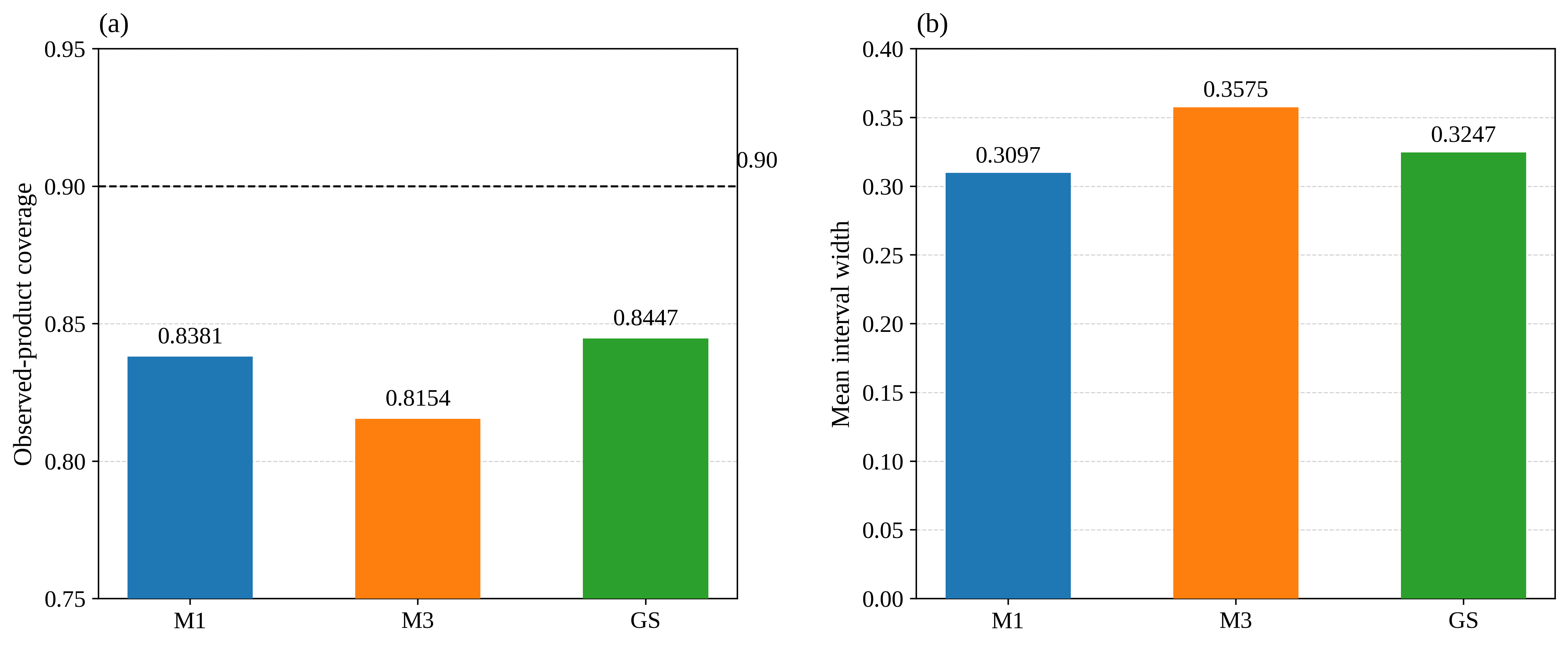}
\caption{Held-out NSW observed-product uncertainty at 90-m support under
the primary RF track. (a) Pooled observed-product coverage for M1, M3,
and the treatment-free graph-safe fallback GS over the held-out test
frame; the dashed horizontal line denotes nominal 0.90 coverage.
(b) Mean interval width for M1, M3, and GS evaluated on the same frozen
60-target outcome-blind subset used for finite-domain M3 inversion.
The width comparison is therefore restricted to this common target subset
and is not interpreted as population-level efficiency. M2, M4, and M6
are nonoperational because an authentic longitudinal rehabilitation
treatment is unavailable in the NSW archive.}
\label{fig:nsw_observed}
\end{figure*}

Validation geometry materially altered apparent predictive performance.
In the RF/M1 predictive leakage diagnostic, random splitting gave coverage
0.8957, mean width 0.1279, and RMSE 0.0540, compared with 0.8381,
0.3097, and 0.1016 under geographically buffered validation. The
substantially narrower intervals and lower error under random splitting
are consistent with optimistic assessment when training and test
observations remain spatially proximate
\cite{RobertsEtAl2017SpatialCV,
ValaviEtAl2019BlockCV,
PlotonEtAl2020SpatialValidation}.

Cross-mine evaluation provided a further applicability boundary: the
held-out mine did not support a connected local calibration orbit for the
graph-dependent route, so no successful cross-mine local-orbit transfer
was established. Together, the NSW results show substantial sensitivity
to site, predictor, spatial support, and validation geometry, while also
demonstrating that abundant EO observations do not substitute for the
treatment provenance required for causal intervention inference.
\section{Discussion}
\label{sec:discussion}

\subsection{What GeoDose-CP Establishes}
\label{subsec:discussion_establishes}

The main contribution of GeoDose-CP is structural. When continuous-treatment
shift and spatial dependence act jointly, the conformal target reassignment
law does not generally reduce to independently constructed treatment and
spatial weights. G2 instead places the intervention/observational treatment
likelihood, inverse outcome-scale Jacobian, and graph-dependent residual
likelihood in a single candidate-specific reference law. Exact candidate
inversion (G3), sparse approximation (N2), practical coverage transfer (N3),
and support-aware refusal are downstream layers built on that law rather
than substitutes for it.

The reduction analysis also defines the scope of this claim. Under
additional conditions compatible with factorization, the joint construction
reduces to the corresponding product form; outside that regime, the
candidate-specific relationship is nonconstant. GeoDose-CP therefore does
not claim that product weighting is always invalid, but that separately
constructed treatment and spatial weights are insufficient to justify the
general joint target law.

The empirical evidence is consistent with this distinction. M6 maintained
nominal-or-higher configuration-level selective coverage across all 27
MineDoseBench configurations and achieved the strongest minimum local
$q_{0.05}$ among M2--M6. On the selected external-comparator subset, M6
also showed the strongest minimum selective coverage and lower-tail local
robustness, while E3 was the strongest external comparator. M3 and M4
exhibited higher average local coverage, and M4 performed well empirically
despite lacking general G2 justification. Accordingly, the evidence
supports a strong supported-regime selective-coverage and lower-tail
local-robustness profile for the joint construction, rather than universal
numerical dominance.

The strong performance of M4 is therefore informative rather than
contradictory. A factorized method can perform well on a finite benchmark
without representing the reference law implied by the joint
treatment--spatial model. The distinction between M4 and M6 is ultimately
one of structural justification: M4 is an empirical comparator, whereas G2
specifies the candidate-specific target law under the stated joint model.

Causal identification remains a separate requirement. In the hidden
spatial-confounding negative control, predictive coverage remained high
despite violation of supported ignorability. Conformal calibration can
therefore quantify uncertainty for a predictive law even when that law
does not identify the intended causal intervention. GeoDose-CP accordingly
separates identification of the supported stochastic potential outcome from
validity of its conformal prediction set
\cite{LeiCandes2021Counterfactuals,
QinEtAl2025CovariateShiftCausal}.

\subsection{Validity, Refusal, and Efficiency}
\label{subsec:discussion_validity}

Coverage, refusal, and efficiency represent different inferential
properties and should not be collapsed into a single ranking. GeoDose-CP
often prioritizes validity and support fidelity over universal sharpness,
so strong selective coverage may coexist with wider sets or reduced return
when the query approaches the limits of the supported target law.

The endpoint and severe-tail experiments illustrate this behavior. Genuine
atoms at $A=0$ and $A=1$ were inferentially supported, whereas the same
endpoint queries were refused under an interior-only treatment law rather
than approximated from nearby continuous observations. Under severe-tail
stress, M6 returned finite sets for 60\% of queries and attained selective
coverage of 1.000 among those returned. By contrast, E1 and E2 produced
infinite sets for more than 93\% of their applicable severe-tail queries.
Explicit refusal and vacuous inference can therefore preserve coverage in
fundamentally different ways.

Refusal is not itself a guarantee of correct support assessment. A
false-support rate of 0.05 remained under severe-tail stress, and observable
support diagnostics cannot certify causal assumptions such as conditional
ignorability. ESS, weight concentration, graph support, and endpoint checks
are therefore operational diagnostics; they identify important observable
failure modes but cannot replace assumptions about the treatment and
outcome-generating process.

Efficiency is likewise regime-dependent. Under the matched-coverage rule
in \eqref{eq:matched_coverage}, M6 was 3.8\%, 4.9\%, and 11.4\% narrower
than M5 at $\rho=0$, 0.2, and 0.4, respectively, but 31.2\% wider at
$\rho=0.6$; at $\rho=0.8$, the methods were not coverage-matched and no
efficiency comparison was warranted. These results arise from the targeted
finite-domain inversion audit and should not be interpreted as
population-level sharpness estimates.

The external comparison reinforces the same boundary. On coverage-matched
restricted inversions, the spatial/local comparator E3 can be substantially
sharper than M6. The strongest empirical case for GeoDose-CP is therefore
the combination of supported-regime selective coverage, lower-tail local
robustness, and explicit behavior when support deteriorates---not universal
interval efficiency.
\subsection{Scalable Inference and Practical Certification}
\label{subsec:discussion_scalable}

The exact--sparse audit supports scalable GeoDose-CP at the aggregate
level but not pointwise equivalence. Exact and sparse M6 selective coverage
were nearly identical (0.9885 and 0.9874), and only 9/2,700 targets
(0.33\%) changed inclusion status at $\alpha=0.10$. Nevertheless,
individual discrepancies could be much larger, with
$\max|\Delta p|=0.42945$. The sparse route should therefore be interpreted
as an accurate aggregate approximation whose query-level error remains
nonzero.

Approximation sensitivity increased with spatial dependence across S4,
consistent with a harder sparse representation as residual dependence
strengthened. This pattern should not be attributed generally to low ESS
or concentrated weights: target-level linkage to those diagnostics was
unambiguous for only 51/2,700 exact-audit identities. The present evidence
therefore supports a dependence-related approximation pattern without
establishing a universal mechanism for the largest pointwise errors.

Treatment-density estimation defines a separate practical uncertainty
layer. $G_{\mathrm{ESTIMATED}}$ closely tracked
$G_{\mathrm{ORACLE}}$ in aggregate, whereas deliberate misspecification
weakened lower-tail local performance, reducing the minimum local
$q_{0.05}$ to 0.8882. Importantly, the misspecified model simultaneously
produced more favorable ESS and maximum-weight diagnostics. Support
concentration metrics can therefore identify instability but cannot
validate the correctness of $g(A\mid U)$.

These results reinforce the distinction among four objects: the exact
graph-local target law, its sparse approximation, fitted nuisance models,
and finite-sample certification. Empirical agreement between exact and
sparse inference does not eliminate approximation error, just as close
oracle--estimated agreement does not make a fitted treatment model
theorem-certified. N2 and N3 keep these transitions explicit rather than
absorbing them into a single empirical-validity claim.

This distinction is particularly relevant in EO applications, where
flexible learning models can capture complex environmental structure while
formal characterization of dependence, treatment assignment, and
approximation error remains difficult
\cite{MaoMartinReich2024SpatialPrediction,
BarberEtAl2023BeyondExchangeability,
OliveiraEtAl2024Nonexchangeable}. RF, XGBoost, and fitted spatial
nuisance models can therefore be useful practical components of
GeoDose-CP without being assigned stronger guarantees than those actually
established.

\subsection{Earth-Observation Applicability, Limitations, and Outlook}
\label{subsec:discussion_eo}

The NSW study examines whether the inputs required by GeoDose-CP remain
scientifically defensible in a real EO archive. Its design combines
outcome-blind site selection, satellite-product quality screening,
pretreatment-only predictors, geographically separated validation, and
change-of-support analysis
\cite{RobertsEtAl2017SpatialCV,
MeyerEtAl2018TargetValidation,
KoldasbayevaEtAl2024GeospatialML}. It is therefore an applicability study,
not a real-data causal validation of full M6.

The principal boundary is treatment authenticity. The NSW archive contains
spatial rehabilitation information but not an auditable longitudinal
continuous rehabilitation treatment consistent with the estimand.
Constructing a snapshot-derived proxy would change the scientific question
rather than validate the intended intervention. Treatment-dependent
procedures therefore remain nonoperational, and the real-data analysis is
restricted to M1, M3, and the treatment-free graph-safe fallback GS. This
fail-closed behavior is central to the intended use of GeoDose-CP: the
availability of EO observations does not by itself justify an intervention
query.

These procedures operate on observed DEA photosynthetic-vegetation change,
which is a satellite-derived product rather than an error-free latent
measure of ecological recovery
\cite{Lymburner2021DEAFractionalCover,
MartinezFerrerEtAl2022BiophysicalUQ,
GarciaSoriaEtAl2024VegetationUQ}. Their coverage and width therefore
characterize observed-product predictive uncertainty, not causal
GeoDose-CP validity or uncertainty in latent ecological state.

The NSW results also show substantial geographic and model dependence.
Coverage differed markedly among Mt Arthur, HVO, and Bulga, and changing
from RF to XGBoost altered site-specific behavior rather than producing a
uniform shift. Pooled performance should therefore not be interpreted as
universal nominal coverage across mines. Consistently, 10 of 12
mine$\times$scale$\times$predictor spatial fits selected $\rho=0.8$, the
upper boundary of the fitted grid, indicating strong residual dependence
and/or spatial model mismatch. These fitted spatial nuisances, and their
associated N3 quantities, are empirical diagnostics rather than
finite-sample certified coverage bounds.

EO-product quality was not the dominant explanation for these patterns.
The stricter UE$\leq20$ sensitivity retained most eligible cells and did
not materially alter the principal site-, support-, or model-dependent
findings. In contrast, validation geometry had a large effect: random
partitioning produced substantially narrower intervals and lower prediction
error than geographically buffered validation. This is consistent with
optimistic assessment when training and test observations remain spatially
proximate
\cite{RobertsEtAl2017SpatialCV,
ValaviEtAl2019BlockCV,
PlotonEtAl2020SpatialValidation,
MeyerEtAl2018TargetValidation}. For geographically separated deployment,
validation design is therefore part of the scientific problem rather than
a purely technical choice.

Cross-mine transfer defines a still stronger support boundary. Under
leave-one-mine-out evaluation, the graph-dependent route could not form a
connected local calibration orbit for the held-out mine, so successful
cross-mine local-orbit transfer was not demonstrated
\cite{LudwigEtAl2023Transferability,
ZhangEtAl2026TransferConformal}. More generally, dense EO observation does
not guarantee intervention support: outcomes and covariates may be abundant
while the treatment provenance and spatial calibration structure required
for causal intervention inference remain absent.

Several limitations follow directly from these results. In the primary MineDoseBench benchmark, the unit residual scale makes the inverse-scale Jacobian equal to one and numerically inactive. GeoDose-CP can
produce wider sets in some coverage-matched regimes, and spatial/local
conformal methods such as E3 can be sharper. Sparse inference is close to
exact inference at aggregate and decision levels but can differ materially
for individual queries. Treatment-density misspecification can degrade
lower-tail local robustness even when weight diagnostics appear favorable,
and support gates do not eliminate false-support behavior or certify
untestable causal assumptions. The NSW study additionally lacks an
authentic longitudinal treatment, uses an observed EO product rather than
latent ecological truth, has no independent bare-ground confirmation,
provides no evidence of successful causal transfer to disconnected mines,
and does not establish finite-sample certification of the fitted real-data
spatial nuisance.

Future work should prioritize EO datasets containing an authentic
longitudinal rehabilitation treatment, repeated outcomes, and spatially
connected calibration support across sites. Methodologically, tighter
query-level sparse certificates, richer observational treatment models
including dependent assignment, and carefully bounded extensions to causal
interference are natural next steps
\cite{GiffinEtAl2023SpatialInterference,
PapadogeorgouEtAl2022SpatioTemporalCausal,
Leung2022SpatialInterferenceDesign}. These extensions should preserve the
central operating principle of GeoDose-CP: uncertainty about treatment
authenticity, spatial support, identification, approximation, or
certification should remain explicit rather than being converted into
unsupported extrapolation.

\section{Conclusion}
\label{sec:conclusion}

GeoDose-CP provides a support-aware framework for intervention-oriented EO
inference under continuous treatment and spatial dependence, centered on a
graph-local target-orbit law that jointly represents treatment shift,
outcome-scale transformation, and spatial residual dependence rather than
assuming naive factorization. Controlled experiments and MineDoseBench
provided complementary known-truth evidence; in MineDoseBench, M6 showed
strong selective and lower-tail local coverage across all 27 registered
configurations, while exact and sparse inference agreed closely at aggregate
and decision levels without implying pointwise equivalence. The fitted
treatment model closely tracked oracle behavior, whereas deliberate
misspecification weakened local robustness, and M6 showed the strongest
supported-regime coverage and lower-tail local profile on the selected
external-comparator subset, although spatial/local alternatives could be
sharper in coverage-matched regimes. The NSW study established a
complementary applicability boundary: without an authentic longitudinal
rehabilitation treatment, treatment-dependent inference remained
nonoperational rather than being forced through a proxy exposure. Overall,
GeoDose-CP links supported intervention targeting, graph-local conformal
inference, scalable approximation, and principled refusal while keeping
treatment authenticity and inferential support explicit.

\section*{Code and Data Availability}

The publication code, frozen experimental configurations, evaluation
scripts, MineDoseBench implementation and provenance materials, and
data-provenance records associated with this study are publicly available
at \url{https://github.com/Khalidsakib-121/GeoDose-CP}. The frozen
reproducibility release corresponding to this manuscript is available at
\url{https://github.com/Khalidsakib-121/GeoDose-CP/releases/tag/v1.0.0-paper}.

Derived materials that may be redistributed are included in the repository.
Original third-party data from NSW Government sources, Digital Earth
Australia, SILO, TERN, and Geoscience Australia are not redistributed;
source identifiers, access information, and processing provenance required
to reconstruct the analysis inputs are provided in the repository.

\section*{Declaration of Generative AI and AI-assisted Technologies in the Writing Process}

During the preparation of this work the authors used Grammarly in order to proofread the manuscript and improve the clarity of the English writing. After using this tool/service, the author(s) reviewed and edited the content as needed and takes full responsibility for the content of the published article.

\bibliographystyle{IEEEtran}

\bibliography{references}

\end{document}